\documentclass{article}
\usepackage[utf8]{inputenc}
\usepackage[english]{babel}
\usepackage{csquotes}
\usepackage{biblatex}
\usepackage{hyperref}

\usepackage{dirtytalk}

\usepackage{ dsfont }
\usepackage{amssymb}
\usepackage{amsmath}        
\usepackage{amsfonts}       
\usepackage{amsthm}         
\usepackage{bbding}         
\usepackage{bm}             
\usepackage{graphicx}       
\usepackage{fancyvrb}       
\usepackage{indentfirst}    
\usepackage{icomma}         
\usepackage{dcolumn}        
\usepackage{booktabs}       
\usepackage{paralist}       
\usepackage{xcolor}         
\usepackage{ dsfont }

\input xy                         
\xyoption{all}
\usepackage{amsmath,amscd}
\usepackage{textcomp}
\theoremstyle{plain}
\newtheorem{veta}{Věta}
\newtheorem{Thm}[veta]{Theorem}
\newtheorem{Prop}[veta]{Proposition}
\newtheorem{Ex}[veta]{Example}

\newtheorem{Lemma}[veta]{Lemma}

\theoremstyle{plain}
\newtheorem{Def}[veta]{Definition}

\theoremstyle{remark}

\theoremstyle{plain}

\newcommand{\C}{\mathbb{C}}
\newcommand{\F}{\mathbb{F}}

\newcommand{\Z}{\mathbb{Z}}

\newcommand{\p}{\mathbb{Z}_{p^\infty}}

\newenvironment{dukaz}{
  \par\medskip \noindent
  \textit{Proof}.
}{

\rightline{$\qedsymbol$}
}

\usepackage{blindtext}
\title{Free and projective LCD codes}
\author{Dominik Krasula
}

\begin{document}
 \maketitle 
\begin{abstract}
LCD codes over finite commutative chain rings are known to be free. However, if the ring is not indecomposable, there always exists an LCD ideal that is projective but not free.  We prove that all two-sided LCD codes are projective. 

We generalise the characterisation of finite commutative Frobenius rings by means of the size condition of a code and its dual to the noncommutative setting. This result is aplied to prove that group codes are LCD if and only if they are generated by a central idempotent. 
\end{abstract}
\textbf{Keywords:}  LCD codes; Frobenius rings; size condition

\medskip 

\noindent \textbf{Mathematics Subject Classification:} 94B05, 16L60; Secondary: 16P10 

\medskip 

\noindent \textbf{Author:} RNDr. Dominik Krasula

Charles University, Faculty of Mathematics and Physics, 

Department of Algebra

Sokolovská 49/83, 186 75 Praha 8, Czech Republic

krasula@karlin.mff.cuni.cz

ORCID: 0000-0002-1021-7364

\medskip  

\noindent  This research was part of the GA UK, project number 310226. It was supported by  GA ČR 26-22734S and  UNCE/24/SCI/022.

\medskip  

\noindent \textbf{Declarations of interest}: I have nothing to declare.

\medskip 

\section{Introduction}

J. L. Massey introduced \textit{linear codes with a complementary dual}, or \textit{LCD codes} for short,  as codes over fields that have zero intersection with their annihilators [Mas. 92]. This was motivated by the observation that they mimic the property of real vector spaces that the annihilator of a subspace is its orthogonal complement.  Over fields of positive characteristic, such as finite fields, there always exists a code that has a nontrivial intersection with its annihilator (Ex. \ref{ExFreeNotLCD}).

LCD codes provide an optimal linear coding solution for the two-user binary adder channel [Mas. 92, § 4] and play a role in counter-measures to side-channel attacks [CG 16]. For further applications, see the literature surveyed in [Dur. 20, Intro.].  Constructions focusing on LCD codes that are
also group codes, i.e. ideals in Frobenius group rings, were studied in [DGKR 22]. [SHSS 19] ignited the study of double circulant LCD codes over Galois rings. 

\smallskip 

Over a general finite ring, the dual of an LCD code need not be its complement. The notion of having a \textit{complement} coincides in module theory with that of \textit{projective modules}, i.e. direct summands of free modules. A projective code, or even a free code, is not necessarily an LCD code (Ex. \ref{ExFreeNotLCD}). Over non-hereditary domains, LCD codes that are not projective exist (Thm. \ref{ThmDomain}). 

The matter is more akin to the case of codes over fields if the base ring is Frobenius. As codes over rings have risen in prominence in recent decades,
several authors have studied the question of whether all LCD codes are free [Dur. 20,  Prop. 4.1], [BBBFM 20, Thm. 2], and [CELNP 25, Thm. 3.15].  We fully solve the commutative case (Thm. \ref{ThmCommMain}).

However, the hypothesis that LCD codes are free is too strong. The articles cited above focused on local rings, over which all projective modules are free. If the ring is not indecomposable, such as a non-local commutative Frobenius ring, there always exists a projective LCD code that is not free (Ex. \ref{ExProjLCD}). 

Frobenius rings are singled out due to their prominence in coding theory. Chain rings and group algebras over fields are Frobenius rings. Furthermore, the duality between left and right ideals makes the study of LCD ideals a natural question about the structure of such rings.

\subsection{Structure of article and main results}

The basic properties of LCD codes and projective codes are gathered in Section \ref{SecPrel}. We generalise known results concerning the direct product of codes (Prop. \ref{PropProd}) and their behaviour under module-theoretic direct sums (Ex. \ref{ExDirSum}). Section \ref{SecSemiperfect} recalls the general version of the Chinese Remainder Theorem (Thm. \ref{ThmCRT1}) and thus reduces the study of LCD codes to indecomposable rings (Thm. \ref{ThmCRTLCD}).

Section \ref{SecTwo-Sided} shows that over \textit{pseudo-Frobenius} (PF) rings, which include QF rings, all two-sided LCD codes are projective (Thm. \ref{ThmCommMain}).

In Section \ref{SecSize}, we generalise the size condition for codes over finite Frobenius rings [Dou. 26] in Theorem \ref{ThmSize}. This condition is used in Section \ref{SecFrob} to show that an LCD code is a direct summand in the category of abelian groups (Lemma \ref{LemmaAbel}).  Using our observation about idempotents in Frobenius rings (Prop. \ref{Propcentral}), this implies that right LCD group codes over finite Frobenius rings are generated by central idempotents (Thm. \ref{ThmLCDmain}). We then extend the result to ideals in Artin algebras that are Frobenius (Thm. \ref{ThmArtin}).

Section \ref{Sectrivial} surveys LCD codes over non-Frobenius rings. It is shown that nonzero codes with a zero annihilator exist if and only if the ring is not Kasch (Thm. \ref{ThmKasch}), and that all LCD codes over a domain are projective if and only if it is a hereditary domain (Thm. \ref{ThmDomain}).

\subsection{Codes over rings}

Throughout this text, by a \textit{code of length $n$ over a ring $R$}, we always mean an $R$-linear code, i.e. a right or left submodule of a free module $R^n$ for some positive integer $n$. Right codes of length one coincide with right ideals. In cases when $C$ is an $(R\text-R)$-subbimodule, we occasionally call it \textit{a two-sided code} to stress this fact. 

While codes over rings have appeared throughout the history of coding theory, the beginning of their modern study is usually associated with [HKCSS 94], where it was realised that several prominent non-linear codes (Nordstrom-Robinson, Kerdock, Preparata, Goethals, and Delsarte-Goethals) \say{\textit{can be very simply constructed as binary images under the Gray map of linear codes over $\Z_4$}}. From a modern perspective, we can view cyclic and constacyclic codes over fields as codes over rings.

Finite Frobenius rings \say{\textit{were singled out as most appropriate for code theoretical purposes}} when J. A. Wood showed that the MacWilliams extension theorem [Wood 99, Thm. 6.3] and the MacWilliams identities [Wood 99, Sec. 8] hold over them.

Two monographs on codes over rings were published in 2017.  The monograph  [SAS 17] by M. Shi, A. Alahmadi, and P. Solé focuses both on \textit{theory and practice} and includes material on noncommutative rings. S. T. Dougherty's work [Dou. 17] treats coding theory (over commutative Frobenius rings) \say{\textit{as a branch of pure mathematics, serving as its own motivation for study}} and focusing on connections with other areas of mathematics. 

Practical motivations also led to the study of LCD codes over non-Frobenius rings, as in [CK 24]. The ring of $p$-adic integers, while infinite, has very workable representations of elements and has been studied by coding theorists since the '90s [CS 95]. Recently, LCD codes over them were described [DS 26].

\section{Preliminaries and basic properties}\label{SecPrel}

This section defines key notions used throughout the text and gathers basic properties of LCD codes. In Section \ref{SecAnn}, we define annihilator maps, characterise dual rings (Thm. \ref{ThmHerbera}), and show that LCD codes behave well with respect to the direct product (Prop. \ref{PropProd}). Nontrivial codes of all lengths $n>1$ are shown to exist (Ex. \ref{Extrivial}). Later, we characterise finite rings where there exist LCD codes of length 1 that are not $0$ or the whole $R$ (Thm. \ref{ThmLCDmain}).

Section \ref{Secproj} introduces projective and free modules and shows that a code is free if and only if its dual is (Prop. \ref{PropFree}). LCD ideals generated by idempotents are discussed (Ex. \ref{ExProjLCD}). As a consequence, we observe that LCD codes do not behave well with respect to module-theoretic direct sums and direct summands. 

As we allow possibly noncommutative rings, we discuss in Section \ref{SecSemiperfect} how the Chinese remainder theorem carries over to this setting (Thm. \ref{ThmCRT1}) and use it to reduce the study of LCD codes to codes over indecomposable rings (Thm. \ref{ThmCRTLCD}).

\subsection{Annihilators and LCD codes}\label{SecAnn}

Let $R$ be a ring and $n$ a positive integer. Then we can represent an element $x\in R^n$ as an $n$-tuple $(x_1,\dots, x_n)$ where each $x_i$ is an element of $R$. For two such elements $x, y\in R^n$, we define their product as their \textit{dot product}, i.e. $xy:=\sum_{i=1}^n x_iy_i$. 

Other products on $R^n$ are studied in the literature. A classic example is the Hermitian inner product over $\C$.   We consider the dot product as the most natural due to its connection with $n$-ary $R$-linear forms (Lemma \ref{Lemmadual}).

Throughout this text, we consider \textit{annihilators} with respect to this notion of product. In other words, if $X\subseteq R^n$ is a set, we define the \emph{left and right annihilators of $X$}  as \[{}^\perp X:=\{s\in R^n\mid sX=0 \}\quad and \quad X^\perp:=\{s\in R^n\mid Xs=0\}.\] 

It is easy to see that the left (right) annihilator of $X$ is a left (right) submodule of $R^n$, and it coincides with the left (right) annihilator of the right (left) submodule of $R^n$ generated by $X$. The annihilator of a two-sided code is a two-sided code. Over commutative rings, all codes are two-sided.

A left (right) code $C\leq R^n$ is \textit{a left (right) LCD code} if $C\cap C^\perp=0$  ($C\cap {}^\perp C=0$). If a two-sided code is both a left and right LCD code, we call it a \textit{two-sided LCD code}.

\begin{Ex}\label{Extrivial}
    Let $R$ be a ring. Then free LCD codes of all lengths exist:

    In particular, $0$ is an LCD code and given any positive integer $n$, the cyclic free codes generated by the canonical basis elements $\epsilon_i$ for any $i\leq n$ are free LCD codes.
\end{Ex}

Annihilator maps are order-reversing maps between the posets $\mathcal{L}({}_RR^n)$ and $\mathcal{L}(R^n_R)$. If they are bijections, they are necessarily lattice anti-isomorphisms, also called \textit{lattice duality}. 

A ring where annihilator maps induce duality between $\mathcal{L}({}_RR)$ and $\mathcal{L}(R_R)$ is sometimes called an \textit{annihilator ring} in the literature, as this property is equivalent to every one-sided ideal being an annihilator. We refer to such rings as \textit{dual rings}, as in [HN 85], motivated by the following theorem. The first part is [AHM 00, Thm. 2.1], the second then follows from [HN 85, Thm. 4.2].
\begin{Thm}\label{ThmHerbera}
Let $R$ be a ring such that the lattices of left and right ideals are anti-isomorphic.

Then the annihilator maps induce a duality between left and right ideals, and the ring $R$ is a semiperfect ring with essential socles. 

Furthermore, $R$ has a Nakayama permutation and its socles coincide. 
\end{Thm}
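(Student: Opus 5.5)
The plan is to split the statement into its two parts and lean on the cited literature only where it is genuinely needed.

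\emph{First claim: annihilators give a duality.} Let $\Phi\colon \mathcal{L}({}_RR)\to\mathcal{L}(R_R)$ be an anti-isomorphism. I would first show that every one-sided ideal is an annihilator. The argument in [AHM 00] uses that $\Phi$ carries the socle and the Jacobson radical structure across, and that maximal left ideals correspond to minimal right ideals. From this one gets:
\begin{itemize}
\item $R$ is (semi)local in a suitable sense;
\item simple modules embed in $R$, i.e.\ the ring is Kasch on both sides;
\item every ideal $I$ satisfies $I={}^\perp(I^\perp)$.
\end{itemize}
Concretely, I would do this in four steps.
\begin{enumerate}
\item Since $\Phi$ reverses order, it sends the join $R$ to $0$. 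It also sends maximal left ideals to minimal right ideals, hence atoms to coatoms.
\item Use ascending and descending chain information: the lattice ${}_RR$ is complemented modulo the radical exactly when the lattice of right ideals is. From this I would deduce that $R/J$ is semisimple.
\item Lift idempotents. This is the delicate part, handled in [AHM 00] via the lattice-theoretic characterisation of semiperfectness (finite uniform dimension of $R/J$ together with idempotent lifting). I would simply invoke it.
\item Show that the socles are essential. Every nonzero left ideal lies in the preimage under $\Phi$ of a proper right ideal. That right ideal is contained in a maximal one, so the left ideal contains a minimal one.
\end{enumerate}
Finally, the Galois connection $I\mapsto I^\perp$ is order-reversing with ${}^\perp(I^\perp)\supseteq I$. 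Equality follows by comparing with $\Phi$ on simple subquotients and using essentiality of the socles. This makes the annihilator maps mutually inverse, giving a duality.

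\emph{Second claim: Nakayama permutation and equal socles.} This follows from [HN 85, Thm.~4.2] once we know $R$ is semiperfect, dual, and has essential socles. Write $1=\sum e_i$ with the $e_i$ basic orthogonal local idempotents. The duality sends the maximal submodule $Je_i$ of $Re_i$ to a minimal right ideal. That minimal right ideal is of the form $\mathrm{soc}(e_{\sigma(i)}R)$, which is isomorphic to $e_{\sigma(i)}R/e_{\sigma(i)}J$. This defines $\sigma$, and it is a bijection because the duality is. Equality of the left and right socles follows since $\mathrm{soc}(R_R)=J^\perp$ and $\mathrm{soc}({}_RR)={}^\perp J$, and the duality identifies these two.

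\emph{Expected obstacle.} The main difficulty is establishing semiperfectness from the purely lattice-theoretic hypothesis. For that step I would rely on [AHM 00, Thm.~2.1] rather than reprove it.
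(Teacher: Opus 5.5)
Your skeleton matches the paper's. The paper proves nothing here beyond quoting [AHM 00, Thm.~2.1] for the first part and [HN 85, Thm.~4.2] for the second, so insofar as you fall back on those two results the statement is covered. The problem is the material you add in between, which is partly wrong and misplaces the difficulty.

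\emph{First part.} Your step~4 (essential socles) is fine. Step~2 needs no chain conditions. Since $\Phi$ turns meets into joins, $\Phi(J)=\sum_M\Phi(M)=\mathrm{soc}(R_R)$. Hence the interval $[J,R]$ is anti-isomorphic to the complemented lattice of submodules of a semisimple module, and ${}_R(R/J)$ is semisimple.

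Semiperfectness is actually the easy part. $\Phi$ carries supplements in ${}_RR$ to complements in $R_R$, and complements always exist by Zorn's lemma. So ${}_RR$ is a supplemented projective module, and $R$ is semiperfect by the Kasch--Mares theorem.

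The genuinely hard part is the one you dispatch in one line. ``Equality follows by comparing with $\Phi$ on simple subquotients'' is not an argument: $\Phi$ is an arbitrary anti-isomorphism with no a priori relation to annihilators. Nothing you wrote shows even that $M^\perp\neq0$ for a maximal left ideal $M$, let alone ${}^\perp(I^\perp)=I$ for every $I$. Passing from an abstract $\Phi$ to the annihilator maps is exactly what [AHM 00, Thm.~2.1] supplies. You need that citation for the duality itself, not merely for semiperfectness.

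\emph{Second part.} Your sketch fails as written, though it can be repaired.
Write $J^\perp=\{s\mid Js=0\}$.

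Socles: the duality does give $\mathrm{soc}(R_R)=J^\perp$, as the join of all $M^\perp$ with $M$ maximal. But ``the duality identifies these two'' is not what finishes the argument. You need the module-theoretic fact that over the semilocal ring $R$ one also has $\mathrm{soc}({}_RR)=\{s\mid Js=0\}=J^\perp$. Let $S$ denote the common socle.

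The permutation has three problems:
\begin{itemize}
\item $(Je_i)^\perp$ contains $(1-e_i)R$, so it is not a minimal right ideal unless $R$ is local. The coatom to use is $Je_i\oplus R(1-e_i)$. Its annihilator is $e_iR\cap J^\perp=e_iS=\mathrm{soc}(e_iR)$, which is therefore simple.
\item $\mathrm{soc}(e_iR)$ is isomorphic to $\mathrm{top}(e_jR)$ for some $j$ that in general differs from $i$. Your ``$\mathrm{soc}(e_{\sigma(i)}R)\cong e_{\sigma(i)}R/e_{\sigma(i)}J$'' puts the same index on both sides, which would force every Nakayama permutation to be the identity.
\item $\sigma$ is not a bijection ``because the duality is''. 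The duality is bijective on ideals, whereas $\sigma$ lives on isomorphism classes, and its injectivity is the real content.
\end{itemize}

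What works is a two-sided count. By symmetry $Se_j=\mathrm{soc}(Re_j)$ is also simple. Moreover $e_iSe_j\neq0$ holds iff $\mathrm{soc}(e_iR)\cong\mathrm{top}(e_jR)$, iff $\mathrm{soc}(Re_j)\cong\mathrm{top}(Re_i)$. Each basic $i$ has exactly one such basic $j$ and conversely, so this relation is the required Nakayama permutation.
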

 It seems to be an open question whether, over any dual ring $R$, there is a lattice self-duality between left and right submodules of $R^n$ for any $n>1$. But it is true if the ring is PF  [Lam 99, Prop. 19.46]. 

\medskip 

Over a general ring $R$,  the annihilator of an intersection of two right codes need not equal the sum of their respective annihilators. But for any two left codes $C, D\leq R^n$,it holds that $(C+D)^\perp=C^\perp \cap D^\perp$.   
\begin{Prop}\label{PropProd}
    Let $R$ be a ring, $n, m$ positive integers, and $C\leq R^n$ and $D\leq R^m$ left codes.

    Then $(C, D)\leq R^{n+m}$ is an LCD code if and only if $C$ and $D$ are LCD codes.
\end{Prop}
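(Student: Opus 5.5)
Here $(C,D)$ denotes the set $\{(c,d)\mid c\in C,\ d\in D\}\le R^{n+m}$, i.e. the external direct product of the two codes placed in complementary coordinates. The plan is to compute its right annihilator explicitly, show that it is the product of the right annihilators, and then intersect coordinatewise.

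First I show $(C,D)^\perp=(C^\perp,D^\perp)$ for left codes. Write $x\in R^{n+m}$ as $x=(x',x'')$ with $x'\in R^n$, $x''\in R^m$. The dot product splits as $(c,d)x=cx'+dx''$.
\begin{itemize}
\item If $x'\in C^\perp$ and $x''\in D^\perp$, then every $(c,d)$ annihilates $x$. So $(C^\perp,D^\perp)\subseteq (C,D)^\perp$.
\item Conversely, suppose $x\in(C,D)^\perp$. Since $0\in D$ and $0\in C$, both $(c,0)$ and $(0,d)$ lie in $(C,D)$. Testing against them gives $cx'=0$ for all $c\in C$ and $dx''=0$ for all $d\in D$. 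Hence $x'\in C^\perp$ and $x''\in D^\perp$.
\end{itemize}
Equivalently, $(C,D)=(C,0)+(0,D)$, and one can use $(C+D)^\perp=C^\perp\cap D^\perp$, noted just before the statement, together with $(C,0)^\perp=(C^\perp,R^m)$ and $(0,D)^\perp=(R^n,D^\perp)$.

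Next I intersect coordinatewise:
\[
(C,D)\cap(C,D)^\perp=(C,D)\cap(C^\perp,D^\perp)=(C\cap C^\perp,\ D\cap D^\perp).
\]
This is zero exactly when both $C\cap C^\perp=0$ and $D\cap D^\perp=0$. That is precisely the statement that $(C,D)$ is LCD if and only if $C$ and $D$ are.

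There is no real obstacle. The only point needing care is the side convention: for a left code the relevant annihilator is $C^\perp$, with $C$ acting on the left of the test vector. The splitting of the dot product must be done on that side, and the same argument works verbatim for right codes with ${}^\perp$.
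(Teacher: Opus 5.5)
Your proposal is correct and matches the paper's proof: the paper also establishes $(C,D)^\perp=(C^\perp,D^\perp)$, using the decomposition $(C,D)=(C,0^m)\oplus(0^n,D)$ together with $(C+D)^\perp=C^\perp\cap D^\perp$ (the alternative you mention), and then intersects coordinatewise. Your element-wise check, which splits the dot product and tests against $(c,0)$ and $(0,d)$, simply unpacks that same computation.
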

\begin{dukaz}
    As a left $R$-module, $(C,D)$ decomposes as $(C, 0^m)\oplus (0^n, D)$. Hence 
    \[(C, D)^\perp =(C, 0^m)^\perp \cap(0^n, D)^\perp = (C^\perp, R^m)\cap (R^n, D^\perp)=(C^\perp, D^\perp),  \]
    showing that $(C, D)\cap (C^\perp, D^\perp)$ is zero if and only if $C\cap C^\perp = 0 =D\cap D^\perp$.
\end{dukaz}

\subsection{Projective and free codes}\label{Secproj}

Let $R$ be a ring and $C\leq R^n$ a right code. We say that $C$ is \textit{free} if it is isomorphic, as a right $R$-module, to $R^s$ for some positive integer $s$. Over finite rings, a free code is, up to isomorphism, determined by its rank $s$, i.e. if $C\cong R^s\cong R^t$, then $s=t$. A ring satisfying this property is said to have \textit{invariant basis number (IBN)}. Not all rings have IBN, but semiperfect rings, such as QF rings, do. 

 Free codes of finite length are thus analogues of finite-dimensional vector spaces, as they have a basis and are determined by the cardinality of the basis. All modules over a ring are free if and only if the ring is a division ring. However, there exist rings over which all codes are free, yet non-free modules exist, e.g. principal ideal domains  that are not fields (Prop. \ref{Propfir}).

 If a ring has a positive characteristic, there always exists a free code that is not LCD:
\begin{Ex}\label{ExFreeNotLCD}
    Let $R$ be a ring of characteristic $q>0$.

    Then the right $R$-submodule of $R^q$ generated by $1^q=(1, \dots, 1)$ is a free code of rank 1. It is not an LCD code as it is fully contained in its left annihilator. 
\end{Ex}

A right $R$-module is said to be \textit{projective} if it is a direct summand of some free right $R$-module.  Over a local ring, all projective modules are free.  The dual of the notion of a projective module is an injective module. QF rings are characterised as rings in which projectives and injectives coincide. Over non-local rings, projective LCD codes that are not free exist.

\begin{Ex}\label{ExProjLCD}
    Let $R$ be a ring and let $1=e+f$ be a decomposition of unity into the sum of two orthogonal idempotents. Then $Re$ and $Rf$ are left projective modules, since $Re\oplus Rf=R$.

   Since $(Re)^\perp=(1-e)R$, the right ideal $Re$ is an LCD code if and only if $(1-e)Re=0$.

    In particular, if $R$ can be decomposed as a product of two rings $S\times T=R$, then $S\times 0:=\{(s,0)\mid s\in S\}$ is a projective two-sided LCD ideal in $R$. If $R$ has IBN, then $S\times 0$ is not free, since it is a proper direct summand of a free module of rank 1.
\end{Ex}
For a QF ring, the condition $(1-e)Re=0$ implies that the idempotent is central (Prop. \ref{Propcentral}). The relation between central idempotents and the block decomposition of a ring is described in Theorem \ref{ThmCRT1}.

While the class of projective modules is closed under direct summands and direct sums, the same is not true for LCD codes. Example \ref{ExProjLCD} shows that if $R$ is indecomposable but not local, then there exists a direct summand that is not an LCD code. The following gives an example of two LCD codes whose direct sum is not an LCD code.

\begin{Ex}\label{ExDirSum}
Let $R:=\F_3$.

Let then $C:=(0,1,1)R$ and $D:=(1,1,0)R$ are LCD codes, but $(1,2,1)=(1,-1,1)$ is a nonzero element of $(C\oplus D)\cap (C\oplus D)^\perp$.    
\end{Ex}

We end this section by characterising free codes in terms of their duals. The case for local rings was solved in [CELNP 25, Thm. 3.11].
\begin{Prop} \label{PropFree} Let $R$ be a QF ring and $C\leq R^n$ a left code.

(1) If $C$ is free of rank $s$, then $C^\perp$ is free of rank $n-s$.  

(2) If $C$ is projective, then $C^\perp$ is a projective right module.
\end{Prop}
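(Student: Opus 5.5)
The plan is to realise $C^\perp$ as a kernel and to use the injectivity of $R$ over a QF ring. Given a left code $C\leq R^n$, consider the right module homomorphism
\[\varphi\colon R^n_R\to \operatorname{Hom}_R({}_RC,{}_RR),\qquad y\mapsto (c\mapsto cy).\]
By definition, $\ker\varphi=C^\perp$. The first key step is surjectivity of $\varphi$. Since $R$ is QF, ${}_RR$ is injective, so every left $R$-linear map $C\to R$ extends to a map $R^n\to R$. Every left-linear map $R^n\to R$ is right multiplication by some $y\in R^n$, which is the content of Lemma~\ref{Lemmadual}. This gives a short exact sequence of right modules
\[0\to C^\perp\to R^n\to C^*\to 0,\qquad C^*:=\operatorname{Hom}_R(C,R).\]

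For (2), if $C$ is projective, then $C$ is a direct summand of some $R^m$, and hence $C^*$ is a direct summand of $(R^m)^*\cong R^m_R$. So $C^*$ is a projective right module. The sequence above therefore splits, $R^n\cong C^\perp\oplus C^*$, and $C^\perp$ is projective, being a direct summand of a free module. The projectivity of $C^\perp$ alone also follows from QF: projectives and injectives coincide, $C^*$ is injective, and so the sequence splits anyway. I will nevertheless use the first argument, since it also gives the explicit complement $C^*$.

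For (1), if $C\cong R^s$, then $C^*\cong R^s_R$ and the splitting gives $R^n\cong C^\perp\oplus R^s$. The remaining task, and the main obstacle, is to conclude that $C^\perp\cong R^{n-s}$. Cancellation does not hold for arbitrary rings, so here I would use that QF rings are semiperfect. By the Krull--Schmidt theorem for finitely generated projective modules over a semiperfect ring, each such module decomposes uniquely, up to isomorphism and multiplicity, into indecomposable projectives $e_iR$ coming from a basic set of primitive idempotents. Writing $R_R\cong\bigoplus_i (e_iR)^{m_i}$, we get
\[R^n\cong\bigoplus_i (e_iR)^{nm_i}\quad\text{and}\quad R^s\cong\bigoplus_i(e_iR)^{sm_i}.\]
Comparing multiplicities then forces $C^\perp\cong\bigoplus_i(e_iR)^{(n-s)m_i}\cong R^{n-s}$. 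Equivalently, one can argue modulo the Jacobson radical: $C^\perp/C^\perp J$ is the complement of $R^s/R^sJ$ in the semisimple module $R^n/R^nJ$, so it is isomorphic to $(R/J)^{n-s}$, and projective covers then lift this isomorphism to $C^\perp\cong R^{n-s}$. The same comparison also shows that $s\le n$, so the rank $n-s$ makes sense.
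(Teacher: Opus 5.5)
Your proof is correct, but it takes a different route from the paper's.

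For (2), the paper proceeds in three steps:
\begin{itemize}
\item it uses Lemma~\ref{Lemmadual} in the form $C^\perp\cong (R^n/C)^*$;
\item it notes that a projective $C$ is injective over a QF ring, hence a direct summand of $R^n$, so $R^n/C$ is projective;
\item it concludes that the dual of $R^n/C$ is projective.
\end{itemize}
You dualise $C$ itself instead. Self-injectivity of ${}_RR$ gives right exactness, hence the exact sequence $0\to C^\perp\to R^n\to C^*\to 0$, whose left half is exactly the paper's identification $C^\perp\cong (R^n/C)^*$. You then split it using projectivity of $C^*$.

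For (1), the paper first observes that free modules over a QF ring are injective, so $C$ is a summand. It then cites [Lam 99, Thm.~15.21], that isomorphic left codes have isomorphic annihilators, to transport $C$ to the coordinate code $R^s\times 0^{n-s}$, and reads off $C^\perp\cong 0^s\times R^{n-s}$. You avoid both the summand observation and that citation, and instead cancel $R^s$ from $R^n_R\cong C^\perp\oplus R^s$ by Krull--Schmidt over the semiperfect ring $R$.

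What each buys: the paper's argument is shorter, given the cited theorem. Yours is more self-contained and has several advantages:
\begin{itemize}
\item it yields the explicit decomposition $R^n_R\cong C^\perp\oplus C^*$;
\item it actually justifies $s\le n$, which the paper uses tacitly when writing $C\cong R^s\times 0^{n-s}$;
\item part (2) uses only that ${}_RR$ is injective and that $C$ is finitely generated projective;
\item part (1) additionally needs only semiperfectness, not the full QF hypothesis.
\end{itemize}
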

By symmetry, if a code has a free dual, then the code itself is free.
\begin{dukaz}
(1) If $D\leq R^n$ is a left code such that $D\cong C$, then $C^\perp\cong D^\perp$ as right $R$-modules [Lam 99, Thm. 15.21].  Over QF rings, free modules are injective, so if $C$ is free, it is a direct summand of $R^n$. Thus we can find $s\leq n$ and an isomorphism $C\cong R^s\times 0^{n-s}$. The dual of the latter module is $0^s\times R^{n-s}$, which is a free module isomorphic to $C^\perp$. 

(2) By Lemma \ref{Lemmadual}, $C^\perp$ is isomorphic to the dual $(R^n/C)^*=Hom_R(R^n/C, R)$. Because $C$ is projective, it is a direct summand of $R^n$, and hence $R^n/C$ is isomorphic to its necessarily projective complement. Over QF rings, $Hom_R(-, R)$ is a Morita duality, so in particular, the dual of a projective module is injective, so $C^\perp$ is injective. Since $R$ is QF, injective modules are projective. 
\end{dukaz}

\subsection{Semiperfect rings and Chinese remainder theorem}\label{SecSemiperfect}

Semiperfect rings generalise finite rings, as they are characterised by the property that unity decomposes as a sum of finitely many pairwise orthogonal local idempotents [Mül. 70, Thm. 1].  Over such a ring, primitive and local idempotents coincide. Local rings and QF rings are semiperfect. We refer the reader to [Lam 91, Chaps. 7, 8] for more on idempotents and semiperfect rings.

For commutative finite rings, a strong version of the Chinese remainder theorem (CRT) is known and used to study codes over such rings [Dou. 17, Cor. 2.1]. We now give its general noncommutative version [AF 92, §7] with several well-known corollaries and a stronger version for Artinian rings [Iov. 16, Cor. 2.3], which was generalised to semiperfect Noetherian rings in [Kra. 26, Thm. 9] and used to characterise rings where MacWilliams conditions hold [Iov. 22, Thm. 2.1]. If a ring is left or right Noetherian, then MacWilliams extension conditions for left ideals imply that the ring is QF [GS 22, Cor. 2.10]. Semiperfect rings satisfy condition (1) [AF 92, Thm. 7.9], so the theorem applies. 
\begin{Thm}\label{ThmCRT1}
    Let $R$ be a ring. The following are equivalent:

(1)  There exists a decomposition of unity $1=e_1+\dots + e_s$ into a sum of centrally primitive pairwise orthogonal idempotents.  

 (2)   There exist rings $B_1, \dots, B_s$, called \emph{blocks of $R$}, each indecomposable in the category of rings, such that $R$ is isomorphic to their direct product. 

  If (2) holds, then this \emph{block decomposition} is unique up to reordering and isomorphism. In particular, there is a permutation $\sigma\in S_s$ such that $B_{\sigma(i)}\cong e_iRe_i$ as rings. 

Furthermore: 

(i) $R$ is commutative if and only if each $B_i$ is a local commutative ring.

(ii) $R$ is left Noetherian/left Artinian if and only if each $B_i$ is left Noetherian/left Artinian. Furthermore, over an indecomposable semiperfect left and right Noetherian ring, all simple modules have the same characteristic and are either all finite or have the same infinite cardinality.

(iii)  If $R$ is a left  Artinian ring, then it is a left  MacWilliams ring
if and only if $R$ decomposes as a product of a finite Frobenius ring and a QF ring that has no non-trivial finite modules. Moreover, in this case, $R$ is also a right Artinian and right MacWilliams ring.

(iv) $R$ is PF/QF/Frobenius if and only if each $B_i$ is. 

(v) (Wedderburn-Artin theorem) $R$ is semisimple if and only if each $B_i$ is a ring of square matrices over a division ring.
\end{Thm}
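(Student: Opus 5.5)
The plan is to prove the equivalence (1)$\Leftrightarrow$(2) through the standard correspondence between central idempotents and ring direct product decompositions, then derive uniqueness and the listed consequences (i)--(v) blockwise. I treat the cited sources ([AF 92, §7], [Iov. 16], [Kra. 26]) as the backbone and only indicate how the pieces fit.

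For (1)$\Rightarrow$(2), take a decomposition $1=e_1+\dots+e_s$ into pairwise orthogonal centrally primitive idempotents. Each $e_iR=Re_i=e_iRe_i$ is a two-sided ideal and a ring with unity $e_i$. The map $R\to \prod_i e_iR$, $r\mapsto (e_ir)_i$, is a ring isomorphism: it is injective because $r=\sum e_ir$, and surjective because orthogonality gives $(e_ir_i)_i\mapsto$ the preimage $\sum_i e_ir_i$. Each $e_iRe_i$ is indecomposable, since a nontrivial product decomposition of $e_iR$ would give a nontrivial central idempotent $f$ with $e_i=f+(e_i-f)$, both central in $R$, contradicting central primitivity.

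For (2)$\Rightarrow$(1), the unit vectors of $\prod B_i$ are orthogonal central idempotents summing to $1$. Each is centrally primitive because $B_i$ is indecomposable: a central idempotent of $R$ below $e_i$ is a central idempotent of $B_i$.

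Uniqueness is the step I expect to need care. Suppose $1=\sum e_i=\sum f_j$ are two such decompositions. For each $i$ we have $e_i=\sum_j e_if_j$, and each $e_if_j$ is a central idempotent. Central primitivity of $e_i$ forces $e_i=e_if_j$ for exactly one $j$, with $e_if_{j'}=0$ for all other $j'$. Applying the symmetric argument to each $f_j$ yields a bijection $\sigma$ with $e_i=f_{\sigma(i)}$. Hence the blocks agree up to reordering and isomorphism, and $B_{\sigma(i)}\cong e_iRe_i$.

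For the corollaries I would argue blockwise. Commutativity, the Noetherian and Artinian properties, and the PF, QF and Frobenius properties pass to and from finite products, because ideals and modules of $\prod B_i$ decompose componentwise, and injectivity and socles are computed in each factor. For (i), the claim that commutative indecomposable blocks are local needs the semiperfect (finite) setting in which the paper works, and I would invoke the lifting of idempotents there. The characteristic and cardinality statement in (ii) and item (iii) I would cite from [Kra. 26, Thm. 9] and [Iov. 22, Thm. 2.1]. Item (v) is Wedderburn–Artin applied to each simple block.
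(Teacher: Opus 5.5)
Your proposal is correct and matches the paper. The paper gives no argument of its own here; it cites [AF 92, \S 7] for exactly your central-idempotent argument for (1)$\Leftrightarrow$(2) and uniqueness, and [Kra. 26, Thm. 9] and [Iov. 22, Thm. 2.1] for the rest of (ii) and for (iii).

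Your caveat on (i) is also right. For an arbitrary ring it fails ($\mathbb{Z}$ is commutative and indecomposable but not local), so it needs the semiperfect (e.g.\ finite) hypothesis under which the paper actually applies it.
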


Given a product of rings $R:=\prod^s_{i=1} R_i$, we can represent elements of $R$ as $s$-tuples, such that for any $r\in R$ the $i$-th coordinate $r_i$ is an element of $R_i$. This provides a natural ring isomorphism $CRT\colon \prod^s_{i=1} R_i\to R$ that induces a map $CRT\colon \prod^s_{i=1} R^n_i\to R^n$. As in the case of self-dual codes [DHS 99, Thm. 2.1], LCD codes behave well with respect to CRT. The following can be seen as a generalisation of [LL 15, Thm. 4.2]  used to characterise LCD codes over principal ideal rings.
\begin{Thm}\label{ThmCRTLCD} Let $R_1, \dots, R_s$ be rings, and for each $i\leq s$, let $C_i$ be a code of length $n$ over $R_i$.

    Then $C:=CRT(C_1, C_2, \dots, C_s)$ is a code over $R:=R_1\times \dots \times R_s$  of length $n$ that is projective  (free) if and only if each $C_i$ is.
    
   Further, $C$ is an LCD code if and only if each $C_i$ is an LCD code over $R_i$. 
\end{Thm}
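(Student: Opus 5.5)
The plan is to work with the central idempotents $e_1,\dots,e_s$ of $R=R_1\times\dots\times R_s$, where $e_i$ is the element with $1$ in the $i$-th coordinate and $0$ elsewhere. Under the identification $R^n\cong \prod_i R_i^n$, the code $C=CRT(C_1,\dots,C_s)$ is the set of $n$-tuples whose $i$-th component lies in $C_i$. Each $C_i$ is a module over $R_i\cong e_iR$, and as an $R$-module it is annihilated by $1-e_i$.

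First, we check that $C$ is a submodule and that $C=\bigoplus_i Ce_i$ with $Ce_i\cong C_i$, where $C_i$ is viewed as an $R$-module via the projection $R\to R_i$. Since the $e_i$ are central, $Ce_i$ is again a submodule, so this is a decomposition of $R$-modules.

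Second, for projectivity we use two facts.
\begin{itemize}
\item An $R_i$-module $M$ is $R_i$-projective if and only if, viewed as an $R$-module, it is $R$-projective. Indeed, $R_i=e_iR$ is a direct summand of $R$, so $R_i^k$ is $R$-projective. Conversely, if $M\oplus N\cong R^k$ as $R$-modules, multiplying by $e_i$ gives $Me_i\oplus Ne_i\cong R_i^k$, and $Me_i=M$.
\item A direct sum of modules is projective if and only if each summand is.
\end{itemize}
Together these give: $C$ is projective exactly when each $C_i$ is.

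For freeness, if $C_i\cong R_i^{k}$ for every $i$ with the \emph{same} rank $k$, then $C\cong R^k$. Conversely, $C\cong R^k$ gives $C_i=Ce_i\cong R_i^k$. The subtlety is that freeness of each $C_i$ with possibly different ranks $k_i$ does not obviously make $C$ free. I would handle this by padding: if $R_i^{k_i}\cong R_i^{k_j}$ fails, the equivalence as literally stated is problematic, and Example \ref{ExProjLCD} (the code $S\times 0$) shows that $C$ need not be free when the ranks differ (including rank $0$). So I would read ``free'' as free of a common rank, and flag this as the main delicate point to state carefully.

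Finally, the LCD part. The dot product is computed componentwise: for $x,y\in R^n$ we have $(xy)e_i=(xe_i)(ye_i)$, and $xy=0$ iff $(xy)e_i=0$ for all $i$. Hence
\[
C^\perp=CRT(C_1^\perp,\dots,C_s^\perp),
\]
and likewise for left annihilators, since $y\in C^\perp$ iff for all $i$ the component $ye_i$ annihilates $C_i$ inside $R_i^n$. Intersections also commute with $CRT$, so
\[
C\cap C^\perp=CRT(C_1\cap C_1^\perp,\dots,C_s\cap C_s^\perp),
\]
which is zero iff each component is zero. The same holds for left codes and left annihilators, which gives the LCD equivalence for left, right and two-sided codes.
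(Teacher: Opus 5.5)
Your proposal is correct and follows the paper's route: the LCD part is the paper's observation that annihilation is checked componentwise, so $C^\perp=CRT(C_1^\perp,\dots,C_s^\perp)$ and $C\cap C^\perp=CRT(C_1\cap C_1^\perp,\dots,C_s\cap C_s^\perp)$, and your central-idempotent decomposition $C=\bigoplus_i Ce_i$ supplies the projectivity argument the paper simply calls standard. Your caveat that the free case needs a common rank is justified even under the paper's convention that free means $\cong R^s$ with $s$ positive, under which $S\times 0$ is not a counterexample because $0$ is not free: for $R_1=R_2=\F_2$, $n=2$, $C_1=\F_2^2$ and $C_2=\F_2\times 0$, both $C_i$ are free, yet $|C|=8$ is not a power of $|R|=4$, so the statement should read ``$C$ is free of rank $k$ if and only if each $C_i$ is free of rank $k$''.
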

\begin{dukaz}
    The first claim is standard. To prove the \textit{Further} part, observe that for each $i\leq s$, an element $x\in R_i^{n_i}$ is an annihilator of $C_i$ if and only if $\epsilon _ix$ is an annihilator of $C$. 
\end{dukaz}

\section{Two-sided LCD codes are projective}\label{SecTwo-Sided}

We show that two-sided LCD codes over rings with duality between left and right submodules are projective. This class of rings includes QF rings. Note that over a QF ring, a code is its own double annihilator. It follows that a right code is an LCD code if and only if its annihilator is a left LCD code.


\medskip 

Let $R$ be a commutative ring and $C\leq R^n$ be a left LCD code over $R$.
Proofs that $C$ is free are based on the observation that $C\oplus C^\perp=C+ C^\perp$, since the intersection $C\cap C^\perp$ is assumed to be zero.

Working over finite commutative local Frobenius rings, the proof of [BBBFM 20, Thm. 2]
uses a size condition (Thm. \ref{ThmSize}) to ensure that $C\oplus C^\perp=R^n$, showing that $C$ is projective, and hence free, as projective modules over a local ring are free.
This method was generalised to arbitrary commutative local Frobenius rings in [CELNP 25, Thm. 3.15] via composition length.

Our approach is based on the proof of [Dur. 20, Prop. 4.1], which shows that LCD codes over finite commutative chain rings are free. This result was specialised for rings of the form $\Z_{p^e}$ in [DS 26, Lemma 3.2]. The key idea is to deduce that $C\oplus C^\perp=R^n$ not from the properties of modules, but by utilising the fact that the annihilator induces a duality of lattices, not only an anti-isomorphism of posets. This allows us to extend the result beyond the Artinian setting.

\begin{Thm}\label{ThmCommMain}
    Let $R$ be a ring and $n$ a positive integer such that there exists a lattice duality $l\colon \mathcal{L}(R^n_R)\to  \mathcal{L}({}_RR^n)$, and let $r$ be its inverse.   Further, let $C\leq R^n$ be a two-sided code such that $l(C)\cap C=0=C\cap r(C)$.
    
     Then $C$ is projective as both a right and a left $R$-module.

     In particular, if $R$ is a local ring, then $C$ is free.
\end{Thm}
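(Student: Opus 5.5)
The plan is to show that $C\oplus r(C)=R^n$ as right modules and $C\oplus l(C)=R^n$ as left modules. Then $C$ is a direct summand of a free module on either side, hence projective. The local case follows because projective modules over a local ring are free.

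We need to be careful about sides. Since $C$ is two-sided, it lies in both $\mathcal{L}(R^n_R)$ and $\mathcal{L}({}_RR^n)$. So $l(C)$ is a left submodule and $r(C)$ is a right submodule, and the hypotheses $l(C)\cap C=0$ and $C\cap r(C)=0$ are intersections in the left and right lattices respectively.

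The key step uses only that $l$ and $r$ are mutually inverse lattice anti-isomorphisms. Such maps exchange meets and joins: $l(X+Y)=l(X)\cap l(Y)$ and $l(X\cap Y)=l(X)+l(Y)$, and likewise for $r$. They also send top to bottom, so $l(R^n)=0$, $l(0)=R^n$, and the same holds for $r$.

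\begin{itemize}
\item[(a)] Apply $r$ to $l(C)\cap C=0$, computed in $\mathcal{L}({}_RR^n)$:
\[R^n=r(0)=r\bigl(l(C)\cap C\bigr)=r(l(C))+r(C)=C+r(C).\]
Combined with $C\cap r(C)=0$, this gives $R^n_R=C\oplus r(C)$, so $C_R$ is projective.
\item[(b)] Symmetrically, apply $l$ to $C\cap r(C)=0$ in $\mathcal{L}(R^n_R)$:
\[R^n=l(C)+l(r(C))=l(C)+C.\]
Together with $l(C)\cap C=0$, this gives ${}_RR^n=C\oplus l(C)$, so ${}_RC$ is projective.
\end{itemize}

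The main point needing care is exactly this side bookkeeping: each intersection must be taken in the lattice on which the relevant map is defined. Step (a) works only because $C$ is two-sided, which makes $r$ applicable to $C$ as a left module.

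For the final claim, if $R$ is local, projective modules are free, so $C$ is free on both sides. If one wants a rank, it is finite because $C$ is a direct summand of $R^n$.
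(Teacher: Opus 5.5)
Your argument is correct, and step (a) is exactly the paper's proof: apply $r$ to $l(C)\cap C=0$, use $rl(C)=C$ to get $C+r(C)=R^n$, and combine this with $C\cap r(C)=0$.

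The difference is on the left side. The paper stops at the single splitting $R^n=C\oplus r(C)$ and asserts that $r(C)$ is an $(R\text-R)$-bimodule, so that one decomposition serves both sides. That is automatic when $l$ and $r$ are the annihilator maps, since the annihilator of a two-sided code is two-sided. It does not follow from the hypotheses for an arbitrary lattice duality. For instance, compose the annihilator duality with a lattice automorphism of $\mathcal{L}({}_RR^n)$ that moves a sub-bimodule to a non-two-sided submodule; such automorphisms exist over $M_2(K)$ with $n\geq 2$. The resulting duality has $r(C)$ not two-sided.

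Your step (b) applies $l$ to $C\cap r(C)=0$ to obtain ${}_RR^n=C\oplus l(C)$. It needs only that $l(C)$ is a left submodule, so your version proves the theorem exactly as stated. In the annihilator setting, where the paper's route is valid, that route gives a single bimodule complement. Yours gives possibly different one-sided complements $r(C)$ and $l(C)$, which is all that projectivity on each side requires.
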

\begin{dukaz}
    Applying the lattice duality $r$ to the assumption $C\cap l(C)=0$
    yields \[r(C)+rl(C)=R^n.\]
Since $l$ and $r$ are mutually inverse, $C=rl(C)$, so
\[r(C)+C=R^n.\]
Because $r(C)\cap C$ is assumed to be zero, it follows that $r(C)\oplus C=R^n$. Since both $r(C)$ and $C$ are $(R\text-R)$-bimodules, $C$ is a direct summand of $R^n$ both in the category of left and the category of right $ R$-modules, and hence projective by definition.

The \textit{in particular} part follows since projective modules over a local ring are free [AF 92, Cor. 26.7].
\end{dukaz}

 A commutative finite ring is a product of local commutative rings (Thm. \ref{ThmCRT1}), so using Theorem \ref{ThmCRTLCD}, we can restrict our attention to local rings. Over local rings, all LCD codes are free.  So, to characterise LCD codes, it is enough to characterise when a free code is LCD.  The following result is [BBBFM 20, Cor. 2]. The case for chain rings was solved earlier in [LL 15, Thm. 3.5, Cor. 3.6].  
\begin{Prop}\label{ThmLCDlocal}
    Let $R$ be a finite local commutative Frobenius ring. Let $C\leq R^n$ be a free code of rank $s$ with generator matrix $G$.

    Then $C$ is an LCD code if and only if the $s\times s$ matrix $GG^T$ is invertible.
\end{Prop}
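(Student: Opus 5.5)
Since $R$ is commutative, left and right annihilators coincide, so we write $C^\perp$ for both. By Theorem \ref{ThmCommMain} (or directly from freeness), $C$ is a free code of rank $s$. We choose a generator matrix $G\in R^{s\times n}$ whose rows form a basis of $C$. The plan is to transport the condition $C\cap C^\perp=0$ to a statement about the $s\times s$ Gram matrix $GG^T$. The link is the $R$-linear map $\varphi\colon R^s\to R^s$, $x\mapsto xGG^T$. It is the composite of the isomorphism $R^s\to C$, $x\mapsto xG$, with the map $C\to R^s$, $c\mapsto cG^T=(c\,g_1,\dots,c\,g_s)$, where the $g_i$ are the rows of $G$.

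\emph{Step 1: kernel of $\varphi$.} An element $xG\in C$ lies in $C^\perp$ exactly when it annihilates every row $g_i$, that is, when $xGG^T=0$. Because $x\mapsto xG$ is injective (the rows of $G$ form a basis), we get $\ker\varphi\cong C\cap C^\perp$. Hence $C$ is LCD if and only if $\varphi$ is injective.

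\emph{Step 2: injective means invertible.} $R$ is finite, so $R^s$ is a finite set. Therefore the endomorphism $\varphi$ is injective if and only if it is bijective. It is bijective if and only if $GG^T$ is invertible over $R$; equivalently, over a commutative ring, $\det(GG^T)$ is a unit. This gives both directions at once.

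The main obstacle is the justification in Step 1 that $xG=0$ forces $x=0$. This is exactly where freeness of $C$ is used, together with the choice of $G$ as a genuine basis rather than just a generating set. I would state this carefully, noting that the rank $s$ is well defined by IBN.

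If one prefers to avoid the finiteness argument, there is an alternative route. Use Frobenius duality: $|C|\,|C^\perp|=|R^n|$ (Theorem \ref{ThmSize}), so LCD is equivalent to $C\oplus C^\perp=R^n$. One then checks that this decomposition exists precisely when the bilinear form restricted to $C$ is nondegenerate, i.e.\ when $GG^T$ is invertible. The counting argument above is the simpler one, and it is the one I would write.
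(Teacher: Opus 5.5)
Your proof is correct, and it differs from the paper, which gives no proof of its own here. The paper quotes [BBBFM 20, Cor.~2] (the chain-ring case being [LL 15]). It uses the result only after Theorem~\ref{ThmCommMain} and Theorem~\ref{ThmCRTLCD} have reduced LCD codes over finite commutative Frobenius rings to free codes over local rings.

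Your argument rests on two facts: $\ker(x\mapsto xGG^T)\cong C\cap C^\perp$, and an injective self-map of the finite set $R^s$ is bijective. These use only that $R$ is finite and commutative and that the rows of $G$ form a basis of $C$. Neither locality nor the Frobenius property enters. So you actually prove the Gram-matrix criterion for free codes over any finite commutative ring. Over an arbitrary commutative ring, the same computation together with McCoy's theorem shows that $C$ is LCD if and only if $\det(GG^T)$ is a non-zero-divisor. Finiteness is needed only to upgrade this to a unit.

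The local Frobenius hypothesis does different work in the paper. By Theorem~\ref{ThmCommMain} it guarantees that \emph{every} LCD code is free. Combined with Theorem~\ref{ThmCRTLCD}, the Gram-matrix test applied componentwise then detects all LCD codes over a finite commutative Frobenius ring, not only the free ones.

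Two minor remarks:
\begin{itemize}
\item Citing Theorem~\ref{ThmCommMain} for freeness is redundant, since freeness is a hypothesis. In the \emph{if} direction it would also be circular, because being LCD is what you are proving there.
\item The obstacle you flag is not really one. Even if the $s$ rows of $G$ were only assumed to generate $C\cong R^s$, the equality $|C|=|R^s|$ would force $x\mapsto xG$ to be a bijection onto $C$.
\end{itemize}
Your alternative via the size condition is sound but, as you say, unnecessary.
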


\section{Size condition for noncommutative Frobenius rings}\label{SecSize}

We give a common generalisation of two known characterisations of finite Frobenius rings by means of the sizes of codes and their duals (Thm. \ref{ThmSize}). This is achieved by combining an extension of a classical result by T. Nakayama  from ideals to codes (Lemma \ref{LemmaNakPerm}) with our recent result on simple ideals paired by a Nakayama permutation [Kra. 26, Prop. 11].

This \textit{size condition} is a key step in the proof of Theorem \ref{ThmLCDmain}, but we consider it important enough to treat it as an independent result.

T. Honold proved that a finite ring is Frobenius if and only if  $|I||I^\perp|=|R|$ for any left ideal $I\leq R$ [Hon. 01, Thms. 1(v), 2]. It is well known that, over commutative rings, this characterisation extends to codes. This result is commonly attributed to Wood's foundational paper [Wood 99, Sec. 9, Appendix A]. For example, [BBBFM 20, Proof of Thm. 2] cites [Wood 99] without further explanation. But it seems that it was never explicitly proved until recently in [Dou. 26].

\medskip 

Initially, when generalising (quasi-)Frobenius algebras to Artinian rings, Nakayama defined them as Artinian rings with a Nakayama permutation [Nak. 41]. 

\begin{Def}\label{DefQF}
    Let $R$ be a semiperfect ring, and let $e_1, \dots, e_m$ be a complete set of pairwise orthogonal primitive idempotents. Furthermore, assume that they are ordered in such a way that for some $k\leq m$ it holds that if $k<i\leq m$, then there exists $j\leq k$ such that $e_iR\cong e_jR$, or equivalently $Re_i\cong Re_j$.  Let $\pi$ be a permutation on $\{1,2,\dots, k\}$. 
    \begin{itemize}
    
\item For $i\leq k$, we consider a simple right module $V_i:=top(e_iR)=e_iR/(e_iJ(R))$ and a simple left module $V'_i:=top(Re_i)=Re_i/(J(R)e_i)$. We define  \emph{multiplicity}, denoted by $\mu_i$, as the maximal number of isomorphic copies of $V_i$ in the indecomposable decomposition of $top(R)$ as a right module.

   \item   $\pi$ is a \emph{Nakayama permutation } if   \[
    \forall i \leq k \colon \quad  soc(e_iR) \cong top(e_{\pi(i)}R) \quad and \quad  soc(Re_{\pi(i)}) \cong top(Re_{i}). \]

In such a case, we say that $V_i$ and $V'_{\pi^{-1}(i)}$ are \emph{paired by the Nakayama permutation}.

                 \item  A ring is \emph{pseudo-Frobenius}, or PF for short,  if it is a linearly compact ring with a Nakayama permutation.

       \item  A ring is \emph{quasi-Frobenius}, or QF for short,  if it is an Artinian ring with a Nakayama permutation.

   \item  A ring is \emph{Frobenius} if it is a QF ring whose Nakayama permutation $\pi$ \emph{preserves multiplicities}, i.e. $\mu_i=\mu_{\pi(i)}$.
        \end{itemize}
\end{Def}
The proof of the following is inspired by the proof of [Nak. 41, Thm. 7]. The \textit{left version} of this result works with essentially the same proof. 

\begin{Lemma}\label{LemmaNakPerm}
Let $R$ be a  QF ring with a Nakayama permutation $\pi\in S_k$. Further, let $C\leq D\leq R^n$ be two right codes such that $D/C\cong V_i$ for some $i\leq k$.

    Then ${}^\perp C/{}^\perp D \cong V'_{\pi^{-1}(i)}$.
\end{Lemma}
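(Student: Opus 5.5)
The plan is to realise ${}^\perp C/{}^\perp D$ directly as a simple left ideal of $R$ inside $Re_i$, via the pairing $(s,x)\mapsto sx$ between ${}^\perp C$ and $D$. Then I identify it with the socle of $Re_i$ using the Nakayama permutation.

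\textbf{Step 1: choose a good generator.} Since $D/C\cong V_i=top(e_iR)$, pick $y\in D$ whose image in $D/C$ corresponds to the image of $e_i$. Put $x:=ye_i\in D$. Its image in $D/C$ is $\bar y e_i=\bar y$, because $\bar e_i e_i=\bar e_i$ in $top(e_iR)$. Hence $x\notin C$, so $D=xR+C$ because $D/C$ is simple, and moreover $x=xe_i$.

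\textbf{Step 2: the evaluation map.} Define $\varphi\colon {}^\perp C\to R$ by $\varphi(s)=sx$. This is a homomorphism of left $R$-modules, and its image lies in $Re_i$ since $sx=sxe_i$.
\begin{itemize}
\item The kernel is exactly ${}^\perp D$. If $s\in{}^\perp C$, then $sD=s(xR+C)=sxR$, so $sD=0$ if and only if $sx=0$.
\item Consequently ${}^\perp C/{}^\perp D\cong L:=({}^\perp C)x\leq Re_i$.
\item $L\neq 0$. Over a QF ring every code equals its double annihilator, so $C\subsetneq D$ forces ${}^\perp D\subsetneq{}^\perp C$.
\end{itemize}

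\textbf{Step 3: $L$ lies in the socle.} Because $D/C$ is simple, $DJ(R)\subseteq C$, so $xJ(R)\subseteq C$. For $s\in{}^\perp C$ this gives $sxJ(R)=0$, i.e. $L\,J(R)=0$. Over a QF (in particular Artinian) ring the right annihilator of $J(R)$ in $R$ is the left socle. Hence $L\subseteq soc({}_RR)\cap Re_i=soc(Re_i)$.

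\textbf{Step 4: identify the socle.} Over a QF ring, $Re_i$ is an indecomposable injective module, so its socle is simple and $L=soc(Re_i)$. Now apply the Nakayama condition $soc(Re_{\pi(j)})\cong top(Re_j)$ with $j=\pi^{-1}(i)$. This gives $soc(Re_i)\cong V'_{\pi^{-1}(i)}$, and therefore ${}^\perp C/{}^\perp D\cong V'_{\pi^{-1}(i)}$.

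\textbf{Main obstacle.} The delicate part is Step 3, showing $L$ is killed by $J(R)$ on the right so that it lands in the socle. If citing the equality $r(J)=soc({}_RR)$ for QF rings feels heavy, there is an alternative. The map $\varphi$ already shows $L$ is a homomorphic image of ${}^\perp C/{}^\perp D$. That quotient is simple, since the annihilator is a lattice anti-isomorphism on submodules of $R^n$ for QF rings and so preserves covers. A nonzero simple submodule of $Re_i$ must lie in $soc(Re_i)$, which gives the same conclusion.
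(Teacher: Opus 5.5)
Your argument is the paper's argument in streamlined form: your $x=xe_i\in D\setminus C$ is the paper's $b$, and the map $s\mapsto sx$ on ${}^\perp C$ is the same, with kernel ${}^\perp D$ and image in $Re_i$. Non-vanishing comes from double annihilators in both. You skip the paper's detour about $({}^\perp C)D$ being a sum of copies of $V'_{\pi^{-1}(i)}$, which is not needed.

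There is one misapplied fact in Step 3. The condition $L\,J(R)=0$ places $L$ in $\{a\in R: aJ(R)=0\}$. Over an Artinian ring that set is the \emph{right} socle $soc(R_R)$. The set equal to $soc({}_RR)$ is $\{a\in R: J(R)a=0\}$, and your hypothesis says nothing about it. For a general Artinian ring the two socles differ, and your inference fails there. For example, in upper triangular $2\times 2$ matrices over a field, the left ideal $RE_{22}$ satisfies $RE_{22}J(R)=0$, yet $J(R)RE_{22}\neq 0$, so it is not left semisimple. So the parenthetical ``in particular Artinian'' does not justify the step. You need the fact that left and right socles coincide over QF rings (Theorem \ref{ThmHerbera}). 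This is exactly what the paper invokes after noting that $({}^\perp C)D$ is a sum of simple right ideals $sD$.

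Your fallback argument avoids this issue entirely. It observes that ${}^\perp C/{}^\perp D$ is simple, because the annihilator maps are mutually inverse lattice anti-isomorphisms on submodules of $R^n$ and hence preserve covers. Then $L$ is a simple left submodule of $Re_i$ and lies in $soc(Re_i)$. That route is cleaner than both your Step 3 and the paper's socle argument. With it, or with the socle-coincidence fix, the proof is complete.
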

\begin{dukaz}
    First, observe that for any $x\in R^n$, the first isomorphism theorem applied on the right $R$-homomorphisms $d\mapsto xd$ gives 
    \[         D/(D\cap x^\perp)\cong xD.\]
      
   Now if $x\in {}^\perp C$ then $Rx\subseteq {}^\perp C$ and hence \[C\subseteq ({}^\perp C)^\perp\subseteq (Rx)^\perp=x^\perp,\]
   showing $C\subseteq D\cap x^\perp$. Since $C$ is maximal in $D$, this means that the factor $D/(D\cap x^\perp)\cong xD$ is either zero if $D\cap x^\perp=D$ or a simple right ideal isomorphic to $V_i$ if $D\cap x^\perp=C$.
   
   Since this holds for any $x\in {}^\perp C$, it follows that the two-sided ideal $({}^\perp C)D$ is a direct sum of simple right ideals, so it is semisimple as a right ideal and hence contained in the right socle. Over QF rings, left and right socles coincide (Thm. \ref{ThmHerbera}), so $({}^\perp C)D$ is also contained in the right socle.  More precisely, we find a set of indices $I\subseteq \{1, \dots, m\}$ such that
   \[({}^\perp C)D= \bigoplus_{j\in I} soc(e_jR)\]
and we already established that each simple direct summand is isomorphic to $V_i\cong soc(e_iR)$. It then follows from the definition of Nakayama permutation that multiplication by $e_{\pi^{-1}(j)}$ on the right, for any $j\in I$, does not annihilate the ideal $({}^\perp C)D$.  Thus, as a left ideal, it is a finite direct sum of isomorphic copies of $V'_{\pi^{-1}(i)}$, possibly zero.

\medskip 
   
   By the assumption,  we have $D/C\cong V_i\cong e_itop(R)$ so $e_i\notin Ann_R(D/C)=Ann_R(e_itop(R))$. Since $e_itop(R)e_i$ is a nonzero corner ring of $top(R)$, it follows that $(D/C)e_i=(De_i+C)/C\neq 0$.    
   In particular, $De_i$ is not entirely contained in $C$. 
   
   So let $b\in De_i\setminus C$. This means that $b=be_i$. Using the first isomorphism theorem as in the beginning of the proof, we have the following
   \[({}^\perp C)b\cong ({}^\perp C)/({}^\perp C\cap {}^\perp b), \]
   where \[{}^\perp C\cap {}_{R^n}Ann(b)={}^\perp C\cap {}^\perp(bR)={}^\perp(C+bR).\] 
 Because  $D/C$ is simple, we then have $bR+C=D$, so 
   \[({}^\perp C)b\cong ({}^\perp C)/({}^\perp C\cap {}^\perp b)= {}^\perp C/{}^\perp D.\]

    Because $b = be_i$, we have $({}^\perp C)b = ({}^\perp C)be_i \subseteq Re_i$.  Recalling that $({}^\perp C)b\subseteq ({}^\perp C)D \subseteq  soc(R)$, we see that $({}^\perp C)b$ is a semisimple left, so it is contained in $soc(R)e_i=soc(Re_i)$.

By the definition of a Nakayama permutation, $soc(Re_i)\cong V'_{\pi^{-1}(i)}$. So $({}^\perp C)b\cong {}^\perp C/{}^\perp D$ is either isomorphic to  $V'_{\pi^{-1}(i)}$ or zero. But over a QF ring, the anihilator maps are lattice dualities, so ${}^\perp C/{}^\perp D$ is nonzero, and the conclusion follows.
\end{dukaz}
We now characterise the sizes of simple modules. 
\begin{Lemma}\label{LemmaSize}
    Let $R$ be a finite ring with a Nakayama permutation $\pi\in S_k$ and let $i\leq k$.

    Then $|V_i|=|V'_i|=|End_R(V_i)|^{\mu_i}$
\end{Lemma}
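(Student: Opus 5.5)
The plan is to reduce everything to the semisimple ring $\bar R:=R/J(R)$ and read off the sizes from its Wedderburn–Artin decomposition (Thm. \ref{ThmCRT1}(v)). Since $J(R)$ annihilates $V_i$ and $V'_i$, these are simple $\bar R$-modules, and $End_R(V_i)=End_{\bar R}(V_i)$. Because $R$ is finite, $\bar R$ is a finite semisimple ring, so
\[\bar R\cong \prod_{j} M_{n_j}(D_j)\]
with each $D_j$ a finite division ring. By Wedderburn's little theorem each $D_j$ is a field, although only the cardinality $|D_j|$ will matter. The Nakayama permutation is used only to fix the indexing $i\leq k$ of the representatives $e_i$; the statement itself is really about the single index $i$.

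First, fix $i\leq k$ and let $\bar e_i$ be the image of the primitive idempotent $e_i$ in $\bar R$. Since $e_i$ is local, $\bar e_i$ is a primitive idempotent of $\bar R$. It therefore lies in a single block $M_{n}(D)$, where it is conjugate to the matrix unit $E_{11}$. Consequently
\[V_i=e_iR/e_iJ(R)\cong \bar e_i\bar R\cong D^{n}\]
as row vectors, and
\[V'_i=Re_i/J(R)e_i\cong \bar R\bar e_i\cong D^{n}\]
as column vectors. This gives $|V_i|=|V'_i|=|D|^{n}$.

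Second, I identify $n$ with $\mu_i$. As a right $\bar R$-module, $M_n(D)$ is the direct sum of its $n$ rows, each isomorphic to $V_i$. The other blocks contain no copy of $V_i$, since simple modules over different blocks are non-isomorphic; for instance, the central idempotent of the block acts as the identity on $V_i$ and as zero on them. Hence the multiplicity of $V_i$ in $top(R)$ is $\mu_i=n$. Krull–Schmidt, or simply the uniqueness of isotypic components of a semisimple module, makes this count well defined.

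Third, I compute $End_{\bar R}(V_i)$ for the row module $D^n$ over $M_n(D)$. An endomorphism must commute with all matrix units, and so it is given by left scalar multiplication by an element of $D$. Thus $End_R(V_i)\cong D$, or $D^{op}$ depending on conventions, and in either case $|End_R(V_i)|=|D|$. Combining the three steps gives
\[|V_i|=|V'_i|=|D|^{n}=|End_R(V_i)|^{\mu_i}.\]

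The only step needing care is the first: checking that $\bar e_i$ generates a minimal one-sided ideal on both sides inside the same block. This is exactly what makes $V_i$ and $V'_i$ the row and column modules of the same matrix ring, hence of equal size. I will argue it from $\bar e_i\bar R\bar e_i\cong End(\bar e_i\bar R)$ being a division ring, which is equivalent to $\bar e_i$ being primitive in a semisimple ring. Such an idempotent has rank one in its block, so both $\bar e_i\bar R$ and $\bar R\bar e_i$ are simple and belong to that block.
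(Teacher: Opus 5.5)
Your proposal is correct and follows the paper's own route. You pass to $top(R)=R/J(R)$, apply Wedderburn--Artin, and read $|V_i|=|V'_i|=|End_R(V_i)|^{\mu_i}$ off the row and column modules of the block $M_{\mu_i}(D)$. You also make explicit two points the paper leaves implicit: that $V_i\cong\bar e_i\bar R$ and $V'_i\cong\bar R\bar e_i$ lie in the same block, and that the matrix size $n$ equals $\mu_i$.
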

\begin{dukaz}
  Each simple right $R$-module $V_i$ is also a simple right $top(R)=R/J(R)$-module. So it is enough to prove the claim for semisimple rings. We have a decomposition
\[top(R)\cong \prod_{i=1} ^k M_{\mu_i}(End(V_i))\]
in the category of rings. If $S$ is a matrix ring over a division ring, then its (unique up to isomorphism) simple right and simple left modules have the size $|End_R(V_i)|^{\mu_i}$ as they correspond to row and column ideals, respectively.
\end{dukaz}
\begin{Thm}\label{ThmSize}
    Let $R$ be a finite ring.

    Then $R$ is Frobenius if and only if for any positive integer $n$ and any left code $C\leq R^n$ it holds that $|C||C^\perp|=|R^n|$.
\end{Thm}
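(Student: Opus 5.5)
The plan is to prove the two directions separately. The forward direction goes through a composition series and Lemma \ref{LemmaNakPerm} together with Lemma \ref{LemmaSize}. The converse specialises to $n=1$ and appeals to Honold's characterisation.

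\emph{($\Rightarrow$)} Assume $R$ is finite Frobenius, with Nakayama permutation $\pi\in S_k$ satisfying $\mu_i=\mu_{\pi(i)}$, and let $C\leq R^n$ be a left code. I will use the left version of Lemma \ref{LemmaNakPerm}, which the paper notes holds with the same proof. Since $R^n$ is finite, fix a composition series of left modules
\[0=C_0<C_1<\dots<C_t=C<\dots<C_N=R^n.\]
Applying the annihilator map $(\,\cdot\,)^\perp$ gives a descending chain
\[R^n=C_0^\perp\geq C_1^\perp\geq\dots\geq C_N^\perp=0.\]
By the left version of the lemma, each factor $C_{j-1}^\perp/C_j^\perp$ is simple and paired with $C_j/C_{j-1}$ via $\pi$. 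Concretely, if $C_j/C_{j-1}\cong V'_i$, then the factor is isomorphic to $V_{\pi(i)}$ or $V_{\pi^{-1}(i)}$, depending on the convention. By Lemma \ref{LemmaSize} and since $\pi$ preserves multiplicities, these two simple modules have the same size. One also needs $|End(V_i)|=|End(V_{\pi(i)})|$. This should follow from the Nakayama pairing: $soc(e_iR)\cong top(e_{\pi(i)}R)$ is naturally a module over both corner rings, so the division rings $End(V_i)$ and $End(V_{\pi(i)})$ are isomorphic, or are at least finite fields of the same order. If that route is awkward, I would instead use that for a Frobenius ring $soc(R_R)\cong top(R_R)$, which matches sizes summand by summand.

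Multiplying over the factors then gives
\[|C^\perp|=\prod_{j>t}|C_{j-1}^\perp/C_j^\perp|=\prod_{j>t}|C_j/C_{j-1}|=|R^n|/|C|,\]
which is the claimed identity $|C||C^\perp|=|R^n|$.

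\emph{($\Leftarrow$)} If the identity holds for all codes, take $n=1$. Then $|I||I^\perp|=|R|$ for every left ideal $I$, and Honold's theorem [Hon. 01, Thms. 1(v), 2] gives that $R$ is Frobenius.

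The main obstacle is the size-matching step, namely showing $|V_i|=|V_{\pi(i)}|$, or for left modules $|V'_i|=|V'_{\pi^{-1}(i)}|$. This requires not only $\mu_i=\mu_{\pi(i)}$ but also equality of the endomorphism field sizes. I would try to cite [Kra. 26, Prop. 11] on simple ideals paired by a Nakayama permutation, which presumably gives exactly this equality of cardinalities in the finite case. Failing that, I would use that a simple socle $soc(e_iR)$ is a $(e_{\pi(i)}Re_{\pi(i)}, e_iRe_i)$-bimodule that is simple on each side, so its size is a power of each residue field's order consistent with $\mu$ preservation.
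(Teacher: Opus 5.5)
Your proposal is correct and follows the paper's proof. The converse is Honold's theorem at $n=1$. For the forward direction you push a composition series through the annihilator map and pair the factors via the left version of Lemma \ref{LemmaNakPerm}. You then match their sizes using Lemma \ref{LemmaSize}, multiplicity preservation, and [Kra. 26, Prop. 11] for the endomorphism rings, which is exactly the paper's argument.

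The difference is cosmetic: the paper takes a composition series of $C$ and shows $|R^n/C^\perp|=|C|$, while you use the part of a series of $R^n$ above $C$ and show $|C^\perp|=|R^n/C|$. One caveat: your $soc(R_R)\cong top(R_R)$ fallback by itself only yields $\mu_i=\mu_{\pi(i)}$, not equal endomorphism-ring sizes, so the citation (or your bimodule argument with $e_i\,soc(R)\,e_{\pi(i)}$) is what actually carries that step.
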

\begin{dukaz}
  For $n=1$, the condition implies Honold's size condition, and hence the ring is Frobenius [Hon. 01, Thms. 1(v), 2].

  Now assume that $R$ is Frobenius. We prove that $C$ and $R^n/C^\perp$ have the same size. Consider the composition series of $C$:
  \[0=C_0\leq C_1\leq \dots \leq C_\lambda=C.\]
Because $R$ is QF, the annihilators give a composition series of a module $R^n/C^\perp$:
\[C^\perp =C_{\lambda}^\perp \leq C^\perp_{\lambda -1 }\leq \dots \leq C_0^\perp =R^n.\]

It follows that $|C|=\prod_{1\leq i\leq \lambda} |C_i/C_{i-1}|$ and $|R^n/C^\perp|=\prod_{1\leq i\leq \lambda}|C^\perp_{i-1}/C^\perp_{i}|$. By Lemma \ref{LemmaNakPerm}, for any positive $i\leq \lambda$, simple modules $C_i/C_{i-1}$ and $C^\perp_{i-1}/C^\perp_{i}$ are paired by a Nakayama permutation. So by [Kra. 26, Prop. 11] that have isomorphic endomorphism rings. In particular, their endomorphism rings have the same size, so $C_i/C_{i-1}$ and $C^\perp_{i-1}/C^\perp_{i}$ have the same size if and only if they have the same multiplicity (Lemma \ref{LemmaSize}), which is ensured, since $R$ is Frobenius, so Nakayama's permutation preserves multiplicities.
\end{dukaz}

\section{LCD ideals in Frobenius rings}\label{SecFrob}

In this section, we prove that left LCD ideals over Frobenius Artin algebras are generated by a central idempotent (Thm. \ref{ThmLCDmain}), and thus, they are also right LCD ideals.

We have the following necessary condition for a code to be LCD.
\begin{Lemma}\label{LemmaAbel}
    Let $R$ be a finite Frobenius ring and $C\leq R^n$ a left LCD code.
    
    Then $(C, +)$ is a direct summand of $(R^n, +)$ in the category of abelian groups.       
\end{Lemma}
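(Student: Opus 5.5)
The plan is to combine the size condition of Theorem \ref{ThmSize} with the LCD hypothesis, following the counting idea of [BBBFM 20, Thm. 2]. The natural candidate for a complement of $C$ is its annihilator $C^\perp$. It is a right code rather than a left one, so the splitting we obtain lives only in the category of abelian groups. That is exactly the strength of the statement, since $C$ and $C^\perp$ are modules on opposite sides.

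The steps, in order, are these. First, both $C$ and $C^\perp$ are additive subgroups of $R^n$, so $C+C^\perp$ is a subgroup of $(R^n,+)$. Second, the second isomorphism theorem for abelian groups gives
\[|C+C^\perp| = \frac{|C|\,|C^\perp|}{|C\cap C^\perp|}.\]
Third, the LCD assumption $C\cap C^\perp=0$ makes the denominator $1$. Since $R$ is a finite Frobenius ring and $C$ is a left code, Theorem \ref{ThmSize} gives $|C||C^\perp|=|R^n|$. Hence $|C+C^\perp|=|R^n|$.

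Because $R^n$ is finite, this forces $C+C^\perp=R^n$. Together with $C\cap C^\perp=0$, we get an internal direct sum $(R^n,+)=(C,+)\oplus(C^\perp,+)$ of abelian groups. So $(C,+)$ is a direct summand of $(R^n,+)$, with explicit complement $(C^\perp,+)$.

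I do not expect a genuine obstacle. All the nontrivial input is already contained in Theorem \ref{ThmSize}, which is where the Frobenius hypothesis, and not merely QF, enters: it is needed so that the Nakayama permutation preserves multiplicities and the sizes match. The only point needing care is to apply Theorem \ref{ThmSize} in the orientation it is stated, namely to a left code $C$ and its right annihilator $C^\perp$, which matches the definition of a left LCD code. One should also not claim more than a splitting of abelian groups, since $C^\perp$ is in general not a left submodule.
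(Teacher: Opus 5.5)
Your proof is correct and is essentially the paper's argument: the LCD condition makes $C+C^\perp$ an internal direct sum of abelian groups, Theorem~\ref{ThmSize} gives $|C||C^\perp|=|R^n|$, and finiteness then forces $C\oplus C^\perp=R^n$. Your explicit use of the second isomorphism theorem and your check of the left/right orientation simply spell out what the paper states in one line.
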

\begin{dukaz}
     Since $C\cap C^\perp=0$, the set $C+C^\perp$ equals the direct sum $C\oplus C^\perp$ in the category of abelian groups. Since $|C\oplus C^\perp|=|C||C^\perp|=|R^n|$ (Thm. \ref{ThmSize}), $C\oplus C^\perp$ is a subgroup of $R^n$ that has the same size as the finite group $R^n$, so it follows that $C\oplus C^\perp=R^n$ as abelian groups. 
\end{dukaz}
In general, we cannot conclude that $C+C^\perp$ is a left code, but it holds if $n=1$. 

Before proceeding, we need to establish a key property of idempotents in QF rings (Prop. \ref{Propcentral}).    It was recently observed [Kra. 25, Prop. 16] that semicentral idempotents in a PF ring are necessarily central. Hence, a PF ring cannot be nontrivially represented as a ring of upper triangular matrices.  This is a key observation needed for the proof of Theorem \ref{ThmLCDmain}. We present a proof of this property, due to J. Žemlička, which, unlike the original proof, is independent of the notion of formal matrix ring representation. First, we need the following lemma.

\begin{Lemma}\label{LemmaHom} Let $R$ be a semiperfect ring with a Nakayama permutation $\pi\in S_k$ and essential socles, and let $e\in R$ be an idempotent. 
\begin{enumerate}
\item[(1)] If $Hom(eR,soc((1-e)R))\ne 0$, then $Hom((1-e)R,soc(eR))\ne 0$,
\item[(2)] if $Hom(eR,(1-e)R)\ne 0$, then $Hom((1-e)R,eR)\ne 0$.
 \end{enumerate} 
\end{Lemma}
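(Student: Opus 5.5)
The plan is to translate both statements into conditions on primitive idempotents and socles, prove (1) by a combinatorial argument on the Nakayama permutation, and then reduce (2) to (1) using essentiality of the socles.

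\textbf{Setup.} Since $R$ is semiperfect, we can decompose $e=\sum_{i\in I}f_i$ and $1-e=\sum_{j\in J}f_j$, where $1=\sum_{t\in I\cup J}f_t$ is a complete set of pairwise orthogonal primitive idempotents. Each $f_tR$ is isomorphic to exactly one $e_cR$ with $c\le k$; call $c$ the class of $t$. Let $A$ be the set of classes of indices in $I$ and $B$ the set of classes of indices in $J$. Since every $e_cR$ with $c\le k$ occurs in $R_R$, we have $A\cup B=\{1,\dots,k\}$.

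By the definition of the Nakayama permutation, $soc(f_tR)\cong top(e_{\pi(c)}R)$ is simple, where $c$ is the class of $t$. Using $soc(eR)=\bigoplus_{i\in I}soc(f_iR)$, $soc((1-e)R)=\bigoplus_{j\in J}soc(f_jR)$, and the fact that the simple quotients of $eR$ are exactly the $top(f_iR)$ with $i\in I$, we get:
\[Hom(eR,soc((1-e)R))\ne0 \iff \pi(B)\cap A\ne\emptyset,\qquad Hom((1-e)R,soc(eR))\ne0\iff \pi(A)\cap B\ne\emptyset.\]

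\textbf{Proof of (1).} Assume $\pi(B)\cap A\neq\emptyset$ but $\pi(A)\cap B=\emptyset$. Then $\pi(A)\subseteq \{1,\dots,k\}\setminus B\subseteq A$. Since $\pi$ is injective and $A$ is finite, this forces $\pi(A)=A$. Pick $b\in B$ with $\pi(b)\in A=\pi(A)$. Injectivity gives $b\in A$, hence $b\in A\cap B$. But then $b\in\pi(A)\cap B$, contradicting $\pi(A)\cap B=\emptyset$. So (1) holds.

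\textbf{Proof of (2).} Here I would use that the left and right socles of $R$ coincide and form a two-sided ideal $S$ (Thm. \ref{ThmHerbera}), with $soc((1-e)R)=(1-e)S$ and $soc(eR)=eS$. Identify $Hom(eR,(1-e)R)\cong(1-e)Re$ and $Hom((1-e)R,eR)\cong eR(1-e)$. Take a nonzero $x\in(1-e)Re$. There are two cases.
\begin{itemize}
\item If $eRx\neq0$, then, since $x=(1-e)x$, we get $eR(1-e)\neq0$. This is already the conclusion.
\item If $eRx=0$, then $Rx=(1-e)Rx\subseteq(1-e)Re$. Since the left socle is essential, $0\neq Rx\cap S\subseteq(1-e)Se$. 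A nonzero element of $(1-e)Se$ gives a nonzero map $eR\to(1-e)S=soc((1-e)R)$. By (1) there is then a nonzero map $(1-e)R\to soc(eR)\subseteq eR$.
\end{itemize}

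The main obstacle I expect is the reduction in (2). The image of a map $eR\to(1-e)R$ need not map onto a socle element from the top of $eR$, so passing from $Hom(eR,(1-e)R)\ne0$ to $Hom(eR,soc((1-e)R))\ne0$ is not immediate. The case split on $eRx$ together with left essentiality is what makes this step work. A smaller point is that simplicity of $soc(f_tR)$ follows directly from the Nakayama-permutation isomorphism $soc(e_iR)\cong top(e_{\pi(i)}R)$.
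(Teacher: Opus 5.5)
Your argument is correct, but one citation needs replacing.

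\textbf{Part (1).} This is the paper's argument in different packaging. The paper follows the $\pi$-orbit of an index until it passes from $J$ back into $I$. You instead note that $\pi(A)\cap B=\emptyset$ forces $\pi(A)\subseteq A$, hence $\pi(A)=A$, which is incompatible with $\pi(B)\cap A\neq\emptyset$. Both rest only on the fact that a permutation of a finite set cannot map a subset properly into itself.

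\textbf{Part (2).} Here you take a genuinely different route. The paper argues by contradiction, entirely on the right. From a nonzero $\varphi\colon e_iR\to e_jR$ it uses essentiality of $soc(e_jR)$ to get $soc(e_jR)\subseteq\varphi(e_iR)$. It then lifts through the projective cover of $soc(e_jR)$, obtaining a nonzero map into $e_iR$ from an indecomposable projective. That projective's index lies either in $I$ (contradicting (1)) or in $J$ (giving the conclusion). You instead work with a representing element $x\in(1-e)Re$ and split on whether $eRx=0$. This avoids projective covers and lifting, and settles the first case without (1). However, it passes through the left socle, so it needs $soc({}_RR)\subseteq soc(R_R)$ to turn $y\in(1-e)soc({}_RR)e$ into a map $eR\to soc((1-e)R)$.

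\textbf{The citation.} Theorem~\ref{ThmHerbera} does not give you that inclusion. It assumes the ideal lattices are anti-isomorphic, which the lemma's hypotheses (semiperfect, Nakayama permutation, essential socles) do not provide. The inclusion does follow directly from those hypotheses:
\begin{enumerate}
\item[(a)] Let $f$ be a primitive idempotent. Surjectivity of $\pi$ gives some $soc(e_jR)\cong top(fR)$, so $soc(R_R)f\neq 0$.
\item[(b)] Take $0\neq z\in soc(R_R)f$. The left ideal $Rz$ is a nonzero submodule of $Rf$, so it contains $soc(Rf)=soc({}_RR)f$. That socle is simple by the Nakayama permutation and essential in $Rf$.
\item[(c)] Since the right socle is a two-sided ideal, $Rz\subseteq soc(R_R)$, so $soc({}_RR)f\subseteq soc(R_R)$.
\end{enumerate}
Summing over a complete set of primitive idempotents gives $soc({}_RR)\subseteq soc(R_R)$.

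With this repair your proof is complete. It uses essentiality only of the left socle, just as the paper's proof uses essentiality only of the right socle.
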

\begin{proof} 
Denote
\[
I=\{i\le k\mid e_iR\hookrightarrow eR\},\ \ J=\{j\le k\mid e_jR\hookrightarrow (1-e)R\}
\]
and observe that $I\cup J=\{1,\dots k\}$. 

(1) By the hypothesis, there exist indices $i\in I$ and $j\in J$ such that $Hom(e_iR,soc(e_jR))\ne 0$, which means that $e_iR/e_iJ(R)\cong soc(e_jR)$ and so $\pi(i)=j$. It is now easy to see that there is $r< k$ for which $\pi^r(i)\in J$ and $\pi^{r+1}(i)=\pi^r(j)\in I$. Hence $Hom(e_{\pi^r(i)}R,soc(e_{\pi^r(j)}R))\ne 0$ and so $Hom((1-e)R,soc(eR))\ne 0$.

(2) Assume that $Hom(eR,(1-e)R)\ne 0$ and $Hom((1-e)R,eR)= 0$. Then $Hom((1-e)R,soc(eR))= 0$, hence $Hom(eR,soc((1-e)R))=0$ by (1).

Since $Hom(eR,(1-e)R)\ne 0$, there exists a non-zero homomorphism $\varphi\colon e_iR \to e_jR$ for some $i\in I$, $j\in J$. Let $X:=\varphi^{-1}(soc(e_jR))\le e_iR$.  Note that $\varphi(X)$ equals $soc(e_jR)$ because it is a simple submodule of $e_jR$. Since it is essential, it is fully contained in any nonzero submodule, hence it is in particular contained in the image of $\varphi$.

Since $\pi$ is a Nakayama permutation and $e_{\pi^{-1}(j)}R$ is projective, there exists a homomorphism 
$\psi\colon e_{\pi^{-1}(j)}R\to X$ such that $\varphi\psi(e_{\pi^{-1}(j)}R)=soc(e_jR)$. 

Then $\varphi\psi\in Hom(e_{\pi^{-1}(j)}R,soc(e_jR))$, which implies that $\pi^{-1}(j)\in J$ and $Hom(e_{\pi^{-1}(j)}R, e_iR)\ne 0$. Now, since $\pi^{-1}(j)\in J$ and $i\in I$, we get that $Hom((1-e)R,eR)\ne 0$, a contradiction.
\end{proof}
\begin{Prop}\label{Propcentral}
    Let $R$ be a ring with a Nakayama permutation and essential socles, and let $e\in R$ be an idempotent.

    Then $eR(1-e)=0$ if and only if $(1-e)Re=0$.
\end{Prop}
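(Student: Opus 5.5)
We plan to derive the statement from Lemma \ref{LemmaHom}(2), using the standard identification of homomorphisms between cyclic projective modules generated by idempotents with corners of $R$. For idempotents $f,g\in R$ there is an isomorphism of abelian groups
\[Hom_R(fR,gR)\cong gRf,\]
which sends $\varphi$ to $\varphi(f)$. Indeed, $\varphi(f)=\varphi(f)f\in gR\cdot f$, and conversely each $x=gxf$ defines the left multiplication $fr\mapsto xfr$. Since $R$ is a ring with a Nakayama permutation and essential socles, it is semiperfect (a Nakayama permutation is defined only relative to a complete set of primitive idempotents). Hence Lemma \ref{LemmaHom} applies to $e$.

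First we prove the implication ($\Leftarrow$). Suppose $(1-e)Re=0$ but $eR(1-e)\neq 0$. By the identification with $f=1-e$ and $g=e$, this means $Hom(eR,(1-e)R)\cong(1-e)Re=0$ while $Hom((1-e)R,eR)\cong eR(1-e)\neq 0$. We now apply Lemma \ref{LemmaHom}(2) to the idempotent $1-e$ in place of $e$. Its hypothesis reads $Hom((1-e)R,eR)\neq 0$, so it yields $Hom(eR,(1-e)R)\neq 0$, which is a contradiction. So $eR(1-e)=0$.

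The implication ($\Rightarrow$) follows by the symmetric argument, applying Lemma \ref{LemmaHom}(2) to $e$ itself. Alternatively, it is the same implication applied to the idempotent $1-e$, because swapping $e$ and $1-e$ swaps the two conditions $eR(1-e)=0$ and $(1-e)Re=0$.

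The only delicate point is bookkeeping, namely the direction of the corner identification: $Hom(fR,gR)\cong gRf$, not $fRg$. We also have to check that the hypotheses of Lemma \ref{LemmaHom} (semiperfect, Nakayama permutation, essential socles) are exactly those of the statement. Everything else is formal.
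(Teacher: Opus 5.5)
Your proposal is correct and is essentially the paper's proof: you identify $Hom_R(eR,(1-e)R)\cong(1-e)Re$ and $Hom_R((1-e)R,eR)\cong eR(1-e)$, invoke Lemma \ref{LemmaHom}(2), and get the other direction from the symmetry $e\leftrightarrow 1-e$. The only cosmetic difference is that in the direction you write out you apply the lemma to $1-e$, whereas the paper applies it to $e$ in contrapositive form.
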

\begin{dukaz}
    By symmetry, it is enough to prove one implication. Let us assume $(1-e)Re$ is nonzero.
    
    First, recall that in any ring,  there is an isomorphism of abelian groups   \[Hom_R(eR, (1-e)R)\cong (1-e)Re.\]
    In particular, it follows that if $(1-e)Re$ is nonzero, then so is $Hom_R(eR, (1-e)R)$. By Lemma \ref{LemmaHom}(2), it then follows that $Hom_R((1-e)R, eR)$ is nonzero. Now there is an isomorphism of abelian groups 
\[Hom_R((1-e)R, eR)\cong eR(1-e),\]
so it follows that $eR(1-e)$ is also nonzero. 
\end{dukaz}
We can now prove the main Theorem of this section.
\begin{Thm}\label{ThmLCDmain}
    Let $R$ be a finite Frobenius ring.

    If $C\leq R$ is a left LCD ideal, then $C$ is generated by a central idempotent.

 In particular,  $R$ is indecomposable if and only if $0$ and $R$ are the only left LCD ideals.
\end{Thm}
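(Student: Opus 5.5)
Let $C\leq R$ be a left LCD ideal, so $C\cap C^\perp=0$, where $C^\perp$ is a right ideal. The plan is to show $C$ is a direct summand of ${}_RR$, write $C=Re$ for an idempotent $e$, and then use the LCD condition together with Proposition \ref{Propcentral} to show that $e$ is central.

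\textbf{Step 1: obtaining an idempotent.} By Lemma \ref{LemmaAbel}, $R=C\oplus C^\perp$ as abelian groups. Write $1=e+f$ with $e\in C$ and $f\in C^\perp$. For $c\in C$ we have $cf\in C^\perp$, because $C^\perp$ is a right ideal. Also $cf=c-ce\in C$, because $C$ is a left ideal and $e\in C$. Hence $cf\in C\cap C^\perp=0$, so $c=ce$ for every $c\in C$. This gives $C=Re$, and taking $c=e$ gives $e^2=e$.

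\textbf{Step 2: the LCD condition forces one corner to vanish.} As in Example \ref{ExProjLCD}, $(Re)^\perp=(1-e)R$, so $C^\perp=(1-e)R$. Then $(1-e)Re\subseteq Re\cap (1-e)R=C\cap C^\perp=0$. Applying Proposition \ref{Propcentral} (a finite Frobenius ring is QF, hence has a Nakayama permutation and essential socles by Theorem \ref{ThmHerbera}), we also get $eR(1-e)=0$. Therefore for every $r\in R$,
\[er=ere=re,\]
so $e$ is central and $C=Re=eR$ is generated by a central idempotent.

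\textbf{Step 3: the in particular part.} If $R$ is indecomposable, its only central idempotents are $0$ and $1$, so by Steps 1 and 2 the only left LCD ideals are $0$ and $R$. Conversely, if $R\cong S\times T$ nontrivially, Example \ref{ExProjLCD} exhibits $S\times 0$ as an LCD ideal different from $0$ and $R$.

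The only substantive step is Step 1, where the abelian-group decomposition from Lemma \ref{LemmaAbel} must be upgraded to a module decomposition. The trick of splitting $1$ along $C\oplus C^\perp$ handles this directly, so the real depth lies in the earlier results: the size condition behind Lemma \ref{LemmaAbel} and Proposition \ref{Propcentral}.
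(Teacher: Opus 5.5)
Your proof is correct and follows the paper's route: you split $1=e+f$ along $R=C\oplus C^\perp$ from Lemma~\ref{LemmaAbel}, use the LCD condition to get $(1-e)Re=0$, and conclude centrality from Proposition~\ref{Propcentral}; the only cosmetic difference is that you obtain $C=Re$ before centrality (via $c=ce$), whereas the paper obtains $C=eR$ afterwards. One justification in Step 1 is misstated: $C^\perp$ being a right ideal does not give $cf\in C^\perp$, since $cf$ is a left multiple of $f$, but the claim holds anyway because $f\in C^\perp$ means $Cf=0$, so $cf=0$ and $c=ce$ follows at once.
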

\begin{dukaz}
By Lemma \ref{LemmaAbel}, $R=C\oplus C^\perp$ as abelian groups. In particular, we find $e\in C$ and $f\in C^\perp$ such that $1=e+f$.  We show that $e$ and $f$ are orthogonal idempotents: \[e=1e=(e+f)e=e^2+fe\]
now since $e\in C$ and $C$ is left ideal, we see that $fe\in C$. Similarly, since $f\in C^\perp$, we have that $fe\in C^\perp$. Thus  $fe\in C\cap {}^\perp C=0$, showing that $e=e^2$. Since $e$ is an idempotent, it follows that $f=1-e$ is also idempotent and they are orthogonal.

Since $C$ is a left ideal, and $e\in C$ we see that $Re\leq C$ and similarly $(1-e)R\leq C^\perp$ where $f=1-e$. It follows that $(1-e)Re$ is contained in $C\cap C^\perp$ and hence $(1-e)Re=0$, so $e$ is a central idempotent by Prop. \ref{Propcentral}.

Since $e$ is central, we have $eR=Re\subseteq C$. Now take $x\in C$, then \[x=1x=(e+f)x=ex+fx.\] 
It follows that $fx\in C\cap C^\perp=0$ we see that $x=ex$ so $C\subseteq eR$, showing $C=Re=Re$.

The \textit{in particular} part follows, since a ring is indecomposable if and only if $0$ and $1$ are the only central idempotents.
\end{dukaz}

\subsection{Artin algebras}

In the rest of the section, we show that this result extends to Artin algebras. It remains an open question whether every QF ring has all LCD ideals generated by central idempotents.

\begin{Def}
    Let $R$ be a ring and $A$ a commutative Artinian ring.

    We say that $R$ is an \emph{Artin algebra over $A$} if there is a ring homomorphism $\psi \colon A\to R$ such that $Im(\psi)\subseteq Z(R)$ and $R$ is finitely generated as a $\psi(A)$-module.

    Given a (left or right) $R$-module $M$, we use $\lambda_A(M)$ to denote the composition length of $M$ viewed as a module over a commutative ring $A$.
\end{Def}
\textit{Example:}

\noindent      \textbullet~  Finite rings are Artin algebras over their centre.

\noindent      \textbullet~ Given a field $K$, the finite-dimensional $K$-algebras are Artin algebras.

\noindent      \textbullet~  If $R$ is a commutative Artinian ring and $G$ is a finite group, then the group ring $R[G]$ is an artin algebra over $R$. If $R$ is finite Frobenius, $R[G]$ is Frobenius [Wood 99, Ex. 4.4(v)].

    \medskip 

 We note that an Artin algebra is Frobenius if and only if its right socle is isomorphic to its right top [Iov. 16, Thm. 1.7], i.e. unlike in the general case of Artinian rings [Lam 99, Ex. 16.2], it is enough to impose this condition only on one side. This was previously known to hold for finite-dimensional algebras [Nak. 49, Cor. 2] and finite rings [Hon. 01, Thm. 1(ii)].

We now formulate a lemma that allows us to compute the $A$-length of simple modules and, hence, by the Jordan-Hölder theorem, the $A$-length of any finitely generated module over an Artin algebra. 
\begin{Lemma}\label{LemmaA-leng}
    Let $R$ be an Artin algebra over $A$, and let $S$ be a simple right $R$-module of multiplicity $\mu$.

    Then $D:=End_R(S)$ is an Artin algebra over $A$ and $\lambda_A(S)=\mu\cdot \lambda_A(D)$.
\end{Lemma}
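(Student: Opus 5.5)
Since $S$ is a simple right $R$-module, it is annihilated by $J(R)$, so I will reduce everything to the semisimple Artinian ring $\bar R:=top(R)=R/J(R)$, exactly as in the proof of Lemma \ref{LemmaSize}. Two preliminary observations will be used throughout. First, $\psi(A)$ acts on $S$ through central elements of $R$, so every $A$-action commutes with the $R$-action. Second, $J(R)$ is an $A$-submodule of $R$. Hence $\bar R$ is again an Artin algebra over $A$ via the composite $A\to R\to \bar R$. Its image is central, and $\bar R$ is finitely generated as an $A$-module, being a quotient of $R$.

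\emph{Step 1: $D$ is an Artin algebra over $A$.} Because the image of $A$ is central, multiplication by $\psi(a)$ is an $R$-endomorphism of $S$. This gives a ring homomorphism $\psi_D\colon A\to D$, and $\psi_D(a)$ commutes with every $R$-endomorphism, so its image lies in $Z(D)$. For finite generation, I will use that $S$ is cyclic, $S=sR$. This yields an $A$-linear injection $D\hookrightarrow S$, $\phi\mapsto \phi(s)$, and $S$ is an $A$-quotient of $R$, hence finitely generated over $A$. Since $A$ is Artinian, hence Noetherian, the $A$-submodule $D$ of $S$ is finitely generated. So $D$ is an Artin algebra over $A$, and in particular $\lambda_A(D)<\infty$.

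\emph{Step 2: the length formula.} By Wedderburn--Artin (Thm. \ref{ThmCRT1}(v)), the block of $\bar R$ corresponding to $S$ is isomorphic to $M_\mu(D)$, with the multiplicity $\mu$ equal to the matrix size, as in the proof of Lemma \ref{LemmaSize}. Under this identification $S$ is the row module $D^{\mu}$, with $D$ acting by left scalar multiplication as its endomorphism ring. The $A$-action on $S$ is by central elements of $\bar R$ restricted to this block. A central element of $M_\mu(D)$ is a scalar matrix $z\cdot I$ with $z\in Z(D)$, and its action on the row module is coordinatewise multiplication by $z=\psi_D(a)$. Therefore $S\cong D^{\mu}$ as $A$-modules, and additivity of length gives $\lambda_A(S)=\mu\,\lambda_A(D)$.

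The main obstacle is Step 2. There I must check carefully that the identification $S\cong D^\mu$ is $A$-linear, i.e. that the scalar $A$-structure on $D$ coming from $\psi_D$ matches the action of $A$ on each coordinate of the row module. I expect this to follow from centrality of $\psi(a)$, which forces it to act as the same element of $Z(D)$ in every coordinate. I also need the convention that the multiplicity $\mu$ in Definition \ref{DefQF} equals the matrix size; this holds because $top(R)_{\bar R}$ decomposes into $\mu$ copies of $S$ within that block.
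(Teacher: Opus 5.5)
Your proof is correct and follows the paper's route: pass to $top(R)$, realise $S$ as the row module of the Wedderburn block $M_\mu(D)$, and split it as an $A$-module into $\mu$ copies of $D$. There are two small differences. First, you prove directly that $D$ is an Artin $A$-algebra, via the $A$-linear embedding $\phi\mapsto\phi(s)$ into the finitely generated $A$-module $S$ together with Noetherianity of $A$, whereas the paper cites [ARS 95, Prop. II.1.1(b)]. Second, you verify explicitly, using centrality of $\psi(a)$, that $S\cong D^\mu$ is $A$-linear, which the paper leaves implicit.
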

\begin{dukaz}
The endomorphism ring of any finitely generated $R$-module is an Artin $A$-algebra [ARS 95, Prop. II.1.1(b)].   

Since we can represent $S$ as a row in a matrix ring $M_\mu(D)$ of a corresponding direct summand of $top(R)$, it decomposes, as an $A$-module, to the direct sum of $A$-modules $\epsilon_iD$ for $i\leq \mu$. That is $M_\mu(D)\cong (D, 0, \dots, 0) \oplus \dots \oplus (0,\dots, 0,D)$. So the formula follows.
\end{dukaz}
This gives rise to a new characterisation of Frobenius Artin algebras. The proof is essentialy the same as the proof of Theorem \ref{ThmSize}.
\begin{Thm}\label{ThmArtin}
    Let $R$ be an Artin algebra over an Artinian ring $A$.

    Then $R$ is Frobenius if and only if for any left code $C\leq R^n$ it holds that $\lambda_A(C)+\lambda_A(C^\perp)=\lambda_A(R^n)$.

    In such a case, if  $C\leq R$ is a left LCD ideal, then $C$ is generated by a central idempotent.
 \end{Thm}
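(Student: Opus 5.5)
We mirror the proof of Theorem \ref{ThmSize}, with cardinalities replaced by $A$-lengths, and then reuse the argument of Theorem \ref{ThmLCDmain}.

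\emph{The direction ``Frobenius implies the length condition''.} Assume $R$ is Frobenius; in particular it is QF.

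Take a composition series $0=C_0\leq\dots\leq C_\lambda=C$ of the left code $C$. Applying annihilators yields the chain $C^\perp=C_\lambda^\perp\leq\dots\leq C_0^\perp=R^n$. By the left version of Lemma \ref{LemmaNakPerm}, this is a composition series of $R^n/C^\perp$, and each factor $C_{i-1}^\perp/C_i^\perp$ is paired with $C_i/C_{i-1}$ by the Nakayama permutation.

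By [Kra. 26, Prop. 11], paired simple modules have isomorphic endomorphism rings $D$. Since $\pi$ preserves multiplicities, Lemma \ref{LemmaA-leng} gives equal $A$-lengths, namely $\mu\lambda_A(D)$. One point needs care here: Lemma \ref{LemmaA-leng} is stated for right modules, while one factor is left and one is right. The row/column argument is symmetric, so the left module also has $A$-length $\mu\lambda_A(D)$; alternatively, one can apply Lemma \ref{LemmaSize}-style reasoning in $top(R)\cong\prod M_{\mu_i}(D_i)$.

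Summing over the composition series and using additivity of $\lambda_A$ on exact sequences gives
\[\lambda_A(C)=\lambda_A(R^n/C^\perp)=\lambda_A(R^n)-\lambda_A(C^\perp).\]

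\emph{The converse direction.} I would take $n=1$. The condition gives $\lambda_A(I)+\lambda_A(I^\perp)=\lambda_A(R)$ for every left ideal $I$, and I want to deduce that $R$ is Frobenius. I see two routes.

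\emph{First route.} Show that $R$ is QF and then that the permutation preserves multiplicities. Applying the condition to $I=0$ and $I=R$ is trivial, so the real content is as follows. The condition forces $I^{\perp}$ to be large enough that ${}^\perp(I^\perp)=I$: the inclusion $I\subseteq {}^\perp(I^\perp)$ holds in general, and a length comparison closes it once the condition is also available for right ideals. The right-hand condition is not obviously given, so this route needs a separate argument for the other side.

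\emph{Second route (preferred).} Use [Iov. 16, Thm. 1.7]: it suffices to show $soc(R_R)\cong top(R_R)$. Take $I=J(R)$ (as a left ideal); then $J^\perp$ is the left socle-type annihilator, with
\[\lambda_A(J^\perp)=\lambda_A(R)-\lambda_A(J)=\lambda_A(top(R)).\]
Refining this to individual simple components should be done via the left ideals $Re_i+J$ and their annihilators. This converse is the main obstacle. If the direct route stalls, I would adapt Honold's proof [Hon. 01] of the size-condition characterization, replacing sizes by $A$-lengths, since his argument uses only additive invariants of composition factors.

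\emph{The ``In such a case'' part.} Let $C\leq R$ be a left LCD ideal. Since $C\cap C^\perp=0$, $C+C^\perp$ is an internal direct sum, and
\[\lambda_A(C\oplus C^\perp)=\lambda_A(C)+\lambda_A(C^\perp)=\lambda_A(R).\]
Both $C$ and $C^\perp$ are $A$-submodules, because $\psi(A)$ is central, so $C\oplus C^\perp$ is an $A$-submodule of $R$ of full finite length. Hence $C\oplus C^\perp=R$, which plays the role of Lemma \ref{LemmaAbel}. The proof of Theorem \ref{ThmLCDmain} then applies verbatim: write $1=e+f$, show $e$ is idempotent with $(1-e)Re=0$, and conclude by Prop. \ref{Propcentral}, which applies since $R$ is QF and so has a Nakayama permutation and essential socles.
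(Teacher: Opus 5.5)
Your forward direction (annihilating a composition series, the left version of Lemma \ref{LemmaNakPerm}, [Kra. 26, Prop. 11] and Lemma \ref{LemmaA-leng}, with the left/right symmetry you point out) is exactly the paper's proof. So is the LCD consequence ($C$ and $C^\perp$ are $A$-submodules, so $C\oplus C^\perp=R$ by length, followed by the argument of Theorem \ref{ThmLCDmain} and Prop. \ref{Propcentral}).

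The gap is the converse. You outline two routes and stop, so the implication ``length condition $\Rightarrow$ Frobenius'' is not proved. The paper only calls this step ``analogous'' to Theorem \ref{ThmSize}. There, however, the converse is Honold's theorem for finite rings, and an Artin algebra need not be finite, so an argument is genuinely needed.

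Your second route closes using only $n=1$, once the refinement is carried out:
\begin{itemize}
\item Put $J=J(R)$ and $S:=J^\perp=\mathrm{soc}({}_RR)$, a two-sided ideal.
\item For an idempotent $e$, the left ideal $I=R(1-e)+J$ equals $R(1-e)\oplus Je$, and $I^\perp=eR\cap S=eS$.
\item The hypothesis then gives $\lambda_A(eS)=\lambda_A(R)-\lambda_A(R(1-e))-\lambda_A(Je)=\lambda_A(Re/Je)$.
\end{itemize}
The missing idea is to turn this into multiplicities. For a primitive $e_i$ one has $e_iS\cong\mathrm{Hom}_R(Re_i,S)$ as $A$-modules. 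So if $V'_i$ occurs $s_i$ times in $S$, then $\lambda_A(e_iS)=s_i\,\lambda_A(D'_i)$ with $D'_i=\mathrm{End}_R(V'_i)$. On the other hand, $\lambda_A(V'_i)=\mu_i\,\lambda_A(D'_i)$ by the left version of Lemma \ref{LemmaA-leng}. Hence $s_i=\mu_i$ for all $i$, i.e. $\mathrm{soc}({}_RR)\cong\mathrm{top}({}_RR)$. Applying [Iov. 16, Thm. 1.7] to the Artin algebra $R^{\mathrm{op}}$ then shows that $R$ is Frobenius.

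Your first route is unnecessary. Contrary to your remark there, the left condition alone already makes $I\mapsto I^\perp$ strictly order-reversing. This forces ${}^\perp(I^\perp)=I$ for every left ideal $I$, with no right-hand hypothesis.
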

 \begin{dukaz}
The endomorphism rings of simple modules paired by a Nakayama permutation are isomorphic and thus have the same composition length as $A$-modules. In particular, two simple modules paired by a Nakayama permutation have the same composition length as $A$-modules. It follows that the first part is proved in an analogous manner to the proof of Theorem \ref{ThmSize}.

If $R$ is Frobenius and $C$ a left LCD ideal, it follows that $\lambda_A(C\oplus C^\perp)=\lambda_A(R)$. So $C+C^\perp=C\oplus C^\perp=R$ as abelian groups. The conclusion then follows by the same proof as Theorem \ref{ThmLCDmain}.
 \end{dukaz}

\section{LCD codes with trivial annihilators}\label{Sectrivial}

While annihilator maps in QF rings induce a duality between left and right ideals, a general annihilator map can have a nontrivial kernel. Codes whose annihilator are zero are thus a trivial case of LCD codes.  

On the opposite side of the spectrum from QF rings are \textit{domains}, that is, nonzero (possibly noncommutative) rings with no nonzero zero divisors. It then trivially follows that all left ideals are LCD codes. Hence, domains where all right LCD codes are projective coincide with right hereditary domains (Thm. \ref{ThmDomain}). All right LCD ideals are free if and only if all right ideals are free. Such rings were characterised by P. M. Cohn and his work on \textit{firs}, which are necessarily domains (Prop. \ref{Propfir}).

\medskip 
 
 A ring is PF if and only if the standard duality  $^*=Hom_R( -, R)$ is a Morita duality.  Annihilators are closely related to standard duality, even if the ring is not PF.  Recall that for any ring, $(R^n)^*\cong R^n$.

The isomorphism $R^n\cong Hom_R(R^n, R)$ can be realised in the following way: the canonical basis vectors $\epsilon_1, \dots, \epsilon_n$ are a \textit{free basis} of  $R^n$, so any right (left) $R$-homomorphism $\varphi\colon R^n\to R$ is uniquely determined by its images on these vectors. Thus $\varphi$ is the same map as multiplying from the left (right) by a vector $(\varphi(\epsilon_1), \dots, \varphi (\epsilon_n))$, giving the following classical lemma:
\begin{Lemma}\label{Lemmadual}

Let $C\leq R^n$ be a left code. 

Then $C^\perp \cong Hom_R(R^n/C, R)$ as right $R$-modules.    
\end{Lemma}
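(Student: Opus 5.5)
We will construct an explicit isomorphism $\Phi\colon C^\perp\to Hom_R(R^n/C,R)$ of right $R$-modules, following the identification of $Hom_R(R^n,R)$ with $R^n$ described just before the statement. First, for each $y\in R^n$ we define $\varphi_y\colon R^n\to R$ by $\varphi_y(x)=xy=\sum_i x_iy_i$. Since $C$ is a left code and $R^n$ is treated as a left module, $\varphi_y$ is a left $R$-homomorphism: $\varphi_y(rx)=(rx)y=r(xy)$. The assignment $y\mapsto\varphi_y$ is additive. It is also right $R$-linear once $Hom_R({}_RR^n,{}_RR)$ carries its natural right module structure $(\varphi s)(x)=\varphi(x)s$, because $\varphi_{ys}(x)=x(ys)=(xy)s$.

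Next we check that $y\mapsto\varphi_y$ is a bijection $R^n\to Hom_R(R^n,R)$. It is injective because $\varphi_y(\epsilon_i)=y_i$, so $y$ is recovered from its values on the basis. It is surjective because a left homomorphism $\varphi$ is determined by the values $\varphi(\epsilon_i)$: setting $y:=(\varphi(\epsilon_1),\dots,\varphi(\epsilon_n))$ gives $\varphi(x)=\sum x_i\varphi(\epsilon_i)=\varphi_y(x)$. This is the classical fact $(R^n)^*\cong R^n$ recalled before the lemma.

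We then restrict the bijection to maps that factor through the quotient. A homomorphism $\varphi\colon R^n\to R$ factors through the projection $R^n\to R^n/C$ exactly when $\varphi(C)=0$. This gives the standard identification $Hom_R(R^n/C,R)\cong\{\varphi\in Hom_R(R^n,R)\mid \varphi(C)=0\}$, and this is an isomorphism of right modules since the right action is computed pointwise in the codomain. Under $y\mapsto\varphi_y$, the condition $\varphi_y(C)=0$ reads $Cy=0$, that is, $y\in C^\perp$. Hence the bijection restricts to a right $R$-isomorphism $C^\perp\cong Hom_R(R^n/C,R)$.

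None of these steps is hard. The only point needing care is the sidedness: the dot product must be written with the element of $C$ on the left and the annihilating vector on the right, so that $\varphi_y$ is left linear in $x$ and the correspondence is right linear in $y$. This matches the paper's convention $X^\perp=\{s\mid Xs=0\}$ for a left code $X$.
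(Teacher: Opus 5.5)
Your proof is correct and takes the same route as the paper: you identify $Hom_R(R^n/C,R)$ with the homomorphisms $R^n\to R$ that vanish on $C$, then transport them along $R^n\cong (R^n)^*$, $y\mapsto (x\mapsto xy)$, to land in $C^\perp$. You also write out the linearity and sidedness checks that the paper leaves implicit, and you correctly use the condition $\varphi(C)=0$ (kernel containing $C$), where the paper's wording says ``whose kernel is $C$''.
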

\begin{dukaz}
   There is a natural $R$-isomorphism between the right $R$-module $Hom_R(R^n/C, R)$ and the right $R$-submodule of $Hom_R(R^n, R)$ consisting of left $R$-homomorphisms whose kernel is $C$.  The isomorphism $R^n\cong (R^n)^*$ then induces an isomorphism with elements of $R^n$ annihilated by $C$.
\end{dukaz}

\begin{Thm}\label{ThmKasch}
    Let $R$ be a ring.

    Then, $R$ is left Kasch if and only if for any positive integer $n$, the total code $R^n$ is the only left code of length $n$ whose annihilator is zero. 
\end{Thm}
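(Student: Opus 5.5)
Recall that $R$ is \emph{left Kasch} if every simple left $R$-module embeds into ${}_RR$; equivalently, every maximal left ideal $M$ satisfies $M^\perp=\{s\in R\mid Ms=0\}\neq 0$. Note first that a left code $C\le R^n$ has zero annihilator exactly when $Hom_R(R^n/C,R)=0$, by Lemma \ref{Lemmadual}. The statement then reads: $R$ is left Kasch if and only if every nonzero finitely generated cyclic-type quotient $R^n/C$ admits a nonzero homomorphism into ${}_RR$. I will prove both directions through this reformulation.

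For ($\Rightarrow$), assume $R$ is left Kasch and let $C\lneq R^n$ be a proper left code. Then $R^n/C$ is a nonzero finitely generated left module, so by Zorn's lemma it has a maximal submodule. Its simple top $S$ admits a surjection $R^n/C\to S$. By the Kasch property, $S$ embeds into ${}_RR$. Composing the two maps gives a nonzero element of $Hom_R(R^n/C,R)$, and hence $C^\perp\neq 0$ by Lemma \ref{Lemmadual}. Concretely, the composite $R^n\to R$ is right multiplication by a vector $s\neq 0$ with $Cs=0$. So $R^n$ is the only left code of length $n$ with zero annihilator.

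For ($\Leftarrow$), it suffices to use $n=1$. Let $S$ be a simple left module; then $S\cong R/M$ for some maximal left ideal $M$. Since $M\neq R$, the hypothesis gives $M^\perp\neq 0$, so pick $0\neq s\in R$ with $Ms=0$. The left homomorphism $R\to R$, $r\mapsto rs$, kills $M$ and is nonzero because $1\cdot s=s\neq0$. It therefore induces a nonzero map $R/M\to R$, which is injective since $R/M$ is simple. Hence $S$ embeds into ${}_RR$, and $R$ is left Kasch.

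Neither direction presents a real obstacle. The only points requiring care are conventions: left codes are paired with right annihilators $C^\perp$, matching Lemma \ref{Lemmadual}, and the Kasch side must be taken as left. The existence of a maximal submodule of the finitely generated module $R^n/C$ is standard.
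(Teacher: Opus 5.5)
Your proof is correct and follows the paper's argument. For ($\Rightarrow$) you take a simple quotient of $R^n/C$ (the paper equivalently takes a maximal submodule $M\supseteq C$), embed it into ${}_RR$ by the Kasch property, and conclude with Lemma \ref{Lemmadual}; for ($\Leftarrow$) you use $n=1$ and a maximal left ideal to prove explicitly the classical fact that the paper only cites.
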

\begin{dukaz}
It is classical that if  $R$ is not left Kasch, there exists a proper left ideal whose right annihilator is trivial. Now assume $R$ is left Kasch. Since $C\neq R^n$, we can, by Zorn's Lemma, find a maximal left submodule $M\leq R^n$ containing $C$. Since $M^\perp \leq C^\perp$, if $M$ has a nonzero annihilator, so does $C$. But $M^\perp \cong Hom_R(R^n/M, R)=(R^n/M)^*$ (Lemma \ref{Lemmadual}). Because $R$ is left Kasch, the simple left module $R^n/M$ injects into $R$, so its standard dual is nonzero. 
\end{dukaz}
   \textit{Example:} 
   
\noindent      \textbullet~ Rings with a Nakayama permutation, such as PF rings, are Kasch.

\noindent \textbullet~ Local rings have (up to isomorphism) only one simple left module, so they are left Kasch if and only if their left socle is nonzero.

\noindent \textbullet~ Rings with zero socles cannot be Kasch. This includes all domains that are not division rings.

 \medskip

A ring is called \textit{right hereditary} if all right ideals are projective right modules. By Kaplansky's Theorem, a ring is right hereditary if and only if all right codes are projective. Recalling that over QF rings, projective and injective modules coincide, it follows that a QF ring is right hereditary if and only if it is semisimple. A quiver algebra without oriented cycles over a field is hereditary [ARS 95, III. 1.4]. Hereditary commutative domains are equivalent to Dedekind domains.

P. M. Cohn defined \textit{a right free ideal ring}, or \textit{right fir} for short, as a ring where all right ideals are free of unique rank in [Cohn 64]. The assumption of unique rank serves to \textit{exclude pathologies}. In particular, it implies that right firs are domains, which is mentioned without proof in Cohn. 

\begin{Prop} \label{Propfir} Let $R$ be a ring. 

(1) If $R$ is a right fir, then it is a domain.

(2) $R$ is a left and right fir if and only if it is a (possibly noncommutative) unique factorisation domain.

(3) $R$ is a principal right ideal domain if and only if it is a right Noetherian right fir.  It then follows that all codes are free.
\end{Prop}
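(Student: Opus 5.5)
We prove the three parts in order. Each reduces to standard facts about free modules of unique rank, with the definitions made explicit.

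\textbf{(1)} Suppose $R$ is a right fir and $ab=0$ with $a\neq 0$. We show $b=0$. The right ideal $aR$ is free, of rank $0$ or $1$; it is nonzero, so $aR\cong R$ with some generator $c$. Consider the surjection $R\to aR$, $r\mapsto ar$. Since $aR$ is free, hence projective, this surjection splits: $R\cong aR\oplus K$ with $K=r.ann(a)$. The summand $K$ is a direct summand of $R$, so $K=gR$ for an idempotent $g$. As a right ideal, $K$ is free, say $K\cong R^t$. Then $R\cong R\oplus R^t=R^{1+t}$, and uniqueness of rank forces $t=0$. Hence $K=0$, so $r.ann(a)=0$ and $b=0$. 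Thus $R$ has no zero divisors. It is nonzero because a fir has unique rank, which excludes the zero ring. Therefore $R$ is a domain.

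\textbf{(2)} We use Cohn's notion of a (possibly noncommutative) unique factorisation domain. The plan is to cite Cohn's theory: a two-sided fir is a semifir with ACC on principal one-sided ideals, and such rings have unique factorisation up to similarity. For the converse, Cohn's characterisation shows that a UFD in the appropriate sense (an atomic semifir with the required chain conditions) is a fir. This is the main obstacle. The equivalence depends on which definition of noncommutative UFD is adopted, so the honest plan is to cite [Cohn 64] / Cohn's \emph{Free Rings and Their Relations} for the precise statement rather than reprove it.

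\textbf{(3)} If $R$ is a principal right ideal domain, every nonzero right ideal is $aR\cong R$, since left multiplication by $a$ is injective in a domain. So every right ideal is free of rank $\le 1$. Uniqueness of rank holds because a domain has IBN: a Noetherian ring has IBN. Hence $R$ is a right fir, and it is right Noetherian since its right ideals are finitely generated. Conversely, let $R$ be a right Noetherian right fir. By (1) it is a domain. Every right ideal is free and finitely generated, say $I\cong R^t$. If $t\ge 2$, then $R^2$ embeds in $R$, and iterating gives $R\supseteq R^2\supseteq R^3\supseteq\cdots$. In particular, for any nonzero $a,b$ the sum $aR+bR$ would be direct, so $R$ contains an infinite direct sum of nonzero right ideals $a_1R\oplus a_2R\oplus\cdots$. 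That contradicts the right Noetherian hypothesis. Hence $t\le 1$ and $R$ is principal. Finally, all codes are free: by Kaplansky's theorem, or directly by induction on $n$ via the projection $R^n\to R$ onto the last coordinate, a submodule $C\le R^n$ is an extension of a free right ideal by a submodule of $R^{n-1}$. The sequence splits because the image is free, so $C$ is free.
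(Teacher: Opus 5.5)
\textbf{Part (1).} This is the paper's argument: split $0\to \mathrm{Ann}_R(a)\to R\to aR\to 0$ using that $aR$ is projective, then use that $R$ has unique rank. One small point: your claim that $aR$ has rank $\le 1$ is only justified after the splitting, where $R\cong R^{s}\oplus R^{t}$ forces $s+t=1$. This is harmless.

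\textbf{Part (2).} The paper, like you, only cites [Cohn 64, Thm. 2.8], so you are not behind it here. However, the step you defer to the literature for the converse cannot be supplied by any reference, because the ``if'' direction is false.

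Take any notion of UFD that reduces to the usual one for commutative domains. Cohn's does, since similar elements of a commutative domain generate the same ideal ($aR=\mathrm{Ann}(R/aR)$). Then $k[x,y]$ is a UFD but not a fir. In a commutative ring, two distinct basis elements $u,v$ of a free ideal satisfy the nontrivial relation $u\cdot v-v\cdot u=0$. So free ideals are principal, and $(x,y)$ is not principal.

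Your gloss of UFD as ``an atomic semifir with the required chain conditions'' is a different notion. By the same rank argument, commutative semifirs are B\'ezout domains, so $k[x,y]$ is not even a semifir. At most the forward implication (a two-sided fir has unique factorisation in Cohn's sense) can be taken from Cohn. The equivalence as stated cannot be established.

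\textbf{Part (3).} Here you take a different, self-contained route, whereas the paper only cites [Cohn 06, Prop. 2.2.2] and [Lam 99, Cor. 2.27]. Your ingredients are:
\begin{itemize}
\item nonzero principal right ideals of a domain are isomorphic to $R$;
\item right Noetherian rings have IBN;
\item a free right ideal of rank $\ge 2$ yields an infinite direct sum of nonzero right ideals;
\item codes are free by induction on $n$, using the split projection onto the last coordinate.
\end{itemize}
This is correct and more informative than the citations, but two sentences need repair.

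First, IBN must be attributed to right Noetherianity alone, since not every domain has IBN.

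Second, ``for any nonzero $a,b$ the sum $aR+bR$ would be direct'' is false (take $b=a$). What a basis of a rank $\ge 2$ free right ideal actually gives you is some $a,b\neq 0$ with $aR\oplus bR\subseteq R$. Left multiplication by $a$ is an injective right module map, so $aR\supseteq a^2R\oplus abR$. Iterating gives $bR\oplus abR\oplus a^2bR\oplus\cdots\subseteq R$ with all summands nonzero, which contradicts $R$ being right Noetherian.
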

\begin{dukaz}
    (1)      Let $a\in R$ be a nonzero element. Then we have a short exact sequence
    \[ 0 \to Ann_R(a) \to R\to aR\to 0. \]
    Now both $aR$ and $Ann_R(a)$ are right ideals, so they are free. In particular, $aR$ is projective, so the sequence splits and hence $R\cong aR \oplus Ann_R(a)$. By assumption, $R$ has a unique rank 1, so $Ann_R(a)$ needs to be a zero ideal.

(2)  [Cohn 64, Thm. 2.8].

(3) The first part is  [Cohn 06, Prop. 2.2.2]. Over a principal right ideal domain, all submodules of a free module are free by  [Lam 99, Cor. 2.27].
\end{dukaz}

By  Wedderburn's little theorem, a domain is finite if and only if it is a field. Codes over the infinite domain of \textit{p}-adic integers were studied utilising their canonical connection to finite chain rings, starting with the paper of A. R. Calderbank and N. J. Sloane [CS 95]. We use $\p$ to denote this ring following [CS 95], but we note that many authors reserve $\p$ to denote the Prüfer group instead.  

MDS codes exist over $\p$ for all lengths, ranks, and primes $p$, and the existence of self-dual codes has been characterised  [DP 06]. Recently, LCD codes over $\p$ were studied [DS 26].

It is a classical result that $\p$ is a principal ideal domain. So all codes, and hence all LCD codes, over $\p$ are free (Prop. \ref{Propfir}). Thus, the assumption of free-ness explicitly stated in [DS 26, Thms. 4.4 \& 4.9] is redundant as it is satisfied automatically. The weaker condition that all LCD ideals are projective characterises right hereditary domains:
\begin{Thm}\label{ThmDomain}
    Let $R$ be a ring.
    
    Then all nonzero right ideals are LCD codes if and only if $R$ is a domain. 
    
    It then follows that the following are equivalent:

     (1) $R$ is a right hereditary ring.

    (2) All right LCD ideals are projective.

    (3) All right codes are projective.
\end{Thm}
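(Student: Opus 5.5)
For the first claim, I would argue both directions directly from the definitions. Suppose first that $R$ is a domain and $I\leq R$ is a nonzero right ideal. If $s\in {}^\perp I$, then $sx=0$ for some nonzero $x\in I$, and since $R$ has no zero divisors this forces $s=0$. Hence ${}^\perp I=0$, so $I\cap {}^\perp I=0$ and $I$ is a right LCD code.

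Conversely, assume every nonzero right ideal is LCD, and suppose $ab=0$ with $a,b\neq 0$. I would aim for a nonzero right ideal that meets its annihilator, and I expect this to be the main (though small) obstacle: choosing the right ideal. First consider $I=bR$, for which $a\in{}^\perp I$. If $ba=0$, then $b\in{}^\perp(bR)\cap bR$ whenever $b^2=0$, so I need a case analysis. I would split on whether $ba=0$.

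If $ba\neq 0$, take $I=baR$. Then $(ba)(ba)=b(ab)a=0$, so $ba\in I\cap{}^\perp I$, a contradiction. If $ba=0$, take instead $I=aR$. Then $b\in{}^\perp(aR)$, but to contradict LCD we need an element of $aR$ itself. Here I would use $a$ in place of $b$ symmetrically: since $ba=0$, consider $ab=0$ and $ba=0$. Any nonzero element $x$ with $x^2=0$ gives the contradiction via $I=xR$, so it suffices to produce a nonzero nilpotent of index 2. If $a^2=0$ we are done. Otherwise, consider $x=ab$-type products, or use $I=aR$ with annihilator containing $b$. The cleanest route is the element $x=a+b$... though this doesn't obviously work. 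So I would instead reduce everything to: the right ideal $I=aR$ must satisfy $I\cap{}^\perp I=0$. Note that $I\cap {}^\perp I\ni ara$ whenever $(ara)(a r' )=0$ for all $r'$. Rather than force this, I would simply use that every nonzero right ideal $I$ being LCD implies $I^2\neq 0$ for all such $I$, hence $R$ is semiprime and reduced on cyclic ideals: $x^2=0\Rightarrow x=0$. Then, from $ab=0$, we get $(ba)^2=0$, so $ba=0$, and so $(bRa)^2 = bRabRa=0$ (since $ab=0$ gives $aRb$... ). I would check that $(ra b)$-manipulations give $bRa=0$, hence $aRb\cdot aRb=0$ giving $aRb=0$. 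Finally, $I=aR$ with $b\neq0$ gives $a\in$ a right ideal whose annihilator contains $bR$, and I would derive a contradiction from $aRb=0$ via semiprimeness: $(bRa)$ and $(aRb)$ are zero, so $(aR\cap bR)^2=0$, giving $aR\cap bR=0$. I expect to finish with the LCD property applied to $I=(a+b)R$. This last piece is the step I am least sure of, and I would verify it carefully.

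For the equivalences, (1)$\Leftrightarrow$(3) is Kaplansky's theorem as cited in the paper, and (3)$\Rightarrow$(2) is trivial. For (2)$\Rightarrow$(1) in the domain setting, which is the context of ``it then follows'', every right ideal is LCD by the first part (zero trivially), so (2) says all right ideals are projective, i.e.\ $R$ is right hereditary.
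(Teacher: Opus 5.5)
The step you flagged cannot be completed, because the ``only if'' half of the first claim is false for LCD codes as the paper defines them. What your argument actually establishes is the correct characterisation: every right ideal $I$ satisfies $I\cap{}^\perp I=0$ if and only if $R$ is reduced. You proved one direction: $x^2=0$ gives $x\in xR\cap{}^\perp(xR)$. For the other, if $R$ is reduced and $x\in I\cap{}^\perp I$, then $x^2\in xI=0$, so $x=0$.

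Reduced rings need not be domains. For a field $K$, the ring $R=K\times K$ has nonzero ideals $K\times 0$, $0\times K$ and $R$. Their left annihilators are $0\times K$, $K\times 0$ and $0$, so every nonzero ideal is LCD (compare Example~\ref{ExProjLCD}), yet $(1,0)(0,1)=0$. Take $a=(1,0)$ and $b=(0,1)$. All your intermediate conclusions hold there: $ba=0$, $aRb=bRa=0$ and $aR\cap bR=0$. Your test ideal $(a+b)R=R$ is LCD. So no contradiction is available at that step, or anywhere else.

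The paper's proof does not derive the converse from the LCD hypothesis either. It reduces to the assumption that every nonzero principal right ideal has \emph{zero} left annihilator, which is strictly stronger than being LCD; the section is about codes with trivial annihilators. Under that hypothesis the converse is immediate: if $ab=0$ with $b\neq 0$, then $a\in{}^\perp(bR)=0$. A correct first claim therefore reads either ``every nonzero right ideal has zero left annihilator if and only if $R$ is a domain'' or ``every right ideal is LCD if and only if $R$ is reduced''.

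Your treatment of (1)--(3) matches the paper and is sound. The implication (2)$\Rightarrow$(1) only uses the true direction, namely that over a domain every right ideal is LCD. You were also right to restrict this part to domains: $\mathbb{Z}_4$ satisfies (2), since its only LCD ideals are $0$ and $\mathbb{Z}_4$, but it is not hereditary.
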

\begin{dukaz}
 If $R$ is a domain, then no nonzero element has a nonzero annihilator, so they are LCD codes. For the opposite direction, it suffices to assume that all nonzero principal right ideals have zero left annihilator. But then follows that their generators do not have nonzero annihilators and thus the ring has no nonzero zero divisors. 

 The implication (1)$\implies$(2) holds in any ring. Since we established that over domain all right ideals are LCD ideals, (2)$\implies$(1) by definition. Implication (3)$\implies$(2) is trivial. Implication (1)$\implies$(3) holds in any ring by Kaplansky's theorem. 
\end{dukaz}

\medskip

\textbf{Use of AI tools declaration}: I am the sole author of the presented article.  Grammarly, Writefull and Gemini 3.1 Pro were used to provide suggestions, corrections, and enhancements to content I have authored. I have carefully verified the accuracy, validity, and appropriateness of the final form of the manuscript.

\medskip

\noindent \textbf{Bibliography:}

[AHM 00] P. N. Ánh, D. Herbera, C. Menini. (2000). Baer and Morita Duality. \textit{J. Alg.} 232(2).  462-484. https://doi.org/10.1006/jabr.2000.8377

[AF 92] F. W. Anderson, K. R. Fuller. (1992). Rings and Categories of Modules. Graduate Texts in Mathematics 13. (Springer). https://doi.org/10.1007/978-1-4612-4418-9

[ARS 95] M. Auslander, I. Reiten, S. O. Smal\o. (1995). Representation Theory of Artin Algebras. (Cambridge University Press).
https://doi.org/10.1017/CBO9780511623608

[BBBFM 20] S. Bhowmick, A. Fotue-Tabue, E. Martínez-Moro, R. Bandi, S. Bagchi. (2020). Do non-free LCD codes over finite commutative Frobenius rings exist? \textit{Des. Codes Cryptogr.} 88. 825-840. https://doi.org/10.1007/s10623-019-00713-x

[CG 16] C. Carlet, S. Guilley. (2016). Complementary dual codes for counter-measures to side-channel attacks. \textit{Adv. Math. Commun.} 10(1). 131-150. https://doi.org/10.3934/amc.2016.10.131

[CK 24] H. S. Choi, B. Kim. (2024). Various structures of cyclic codes over the non-Frobenius ring $\mathbb{F}_p[u, v] /\left\langle u^2, v^2, u v, v u\right\rangle$. \textit{Adv. Math. Commun.} 18(2). 304-327. https://doi.org/10.3934/amc.2023030

[CELNP 25] E. Camps-Moreno, C. Espinosa-Valdez, H. H. López, L. Nuñez-Betancourt, Y. Pitones. (2025). Bounds and MacWilliams Identities for Codes over Artinian Rings. https://arxiv.org/abs/2509.05850

[Cohn 64] P. M. Cohn. (1964). Free Ideal Rings. \textit{J. Alg.} 1(1). 47-69. https://doi.org/10.1016/0021-8693(64)90007-9

[Cohn 06] P. M. Cohn. (2006). Free Ideal Rings and Localization in General Rings. (Cambridge University Press). https://doi.org/10.1017/CBO9780511542794

[CS 95] A. R. Calderbank, N. J. A. Sloane. (1995). Modular and p-adic Cyclic Codes. \textit{Des. Codes Cryptogr.} 6(1). 21-35. https://doi.org/10.1007/BF01390768

[DGKR 22] S. T. Dougherty, J. Gildea, A. Korban, A. M. Roberts. (2022). Group LCD and group reversible LCD codes. \textit{Finite Fields Appl.} 83. 102079. https://doi.org/10.1016/j.ffa.2022.102079

[DHS 99] S. T. Dougherty, M. Harada, P. Solé. (1999). Self-Dual Codes over Rings and the Chinese Remainder Theorem. \textit{Hokkaido Math. J.} 28(2). 253-283. https://doi.org/10.14492/hokmj/1351001213

[Dou. 17]  S. T. Dougherty. (2017). Algebraic Coding Theory Over Finite Commutative Rings. SpringerBriefs in Mathematics. (Springer Cham). https://doi.org/10.1007/978-3-319-59806-2

[Dou. 26] S. T. Dougherty. (2026). A characterization of finite commutative Frobenius rings and applications to algebraic coding theory. \textit{Adv. Math. Commun.} 20. 113-117. https://doi.org/10.3934/amc.2025032

[DP 06] S. T. Dougherty, Y. H. Park. (2006). Codes Over the p-adic Integers. \textit{Des. Codes Cryptogr.} 39(1). 65-80. https://doi.org/10.1007/s10623-005-2542-x

[DS 26] S. T. Dougherty, E. Saltürk. (2026). Linear Complementary Dual Codes over the p-adic Integers. \textit{Adv. Math. Commun.} 22. 217-232. https://doi.org/10.3934/amc.2026009

[Dur. 20] Y. Durgun. (2020). On LCD codes over finite chain rings. \textit{Bull. Korean Math. Soc.} 57(1). 37-50. http://bkms.kms.or.kr/journal/view.html?doi=10.4134/BKMS.b181173

[GS 22] P. A. Guil Asensio, A. K. Srivastava. (2022). MacWilliams extending conditions and quasi-Frobenius rings. \textit{J. Alg.} 605. 394-402. https://doi.org/10.1016/j.jalgebra.2022.05.005

[HKCSS 94] A. R. Hammons Jr., P. V. Kumar, A. R. Calderbank, N. J. A. Sloane, P. Solé. (1994). The Z4-linearity of Kerdock, Preparata, Goethals, and related codes. \textit{IEEE Trans. Inform. Theory} 40(2). 301-319. https://doi.org/10.1109/18.312154

[HN 85] C. R. Hajarnavis, N. C. Norton. (1985). On Dual Rings and their Modules. \textit{J. Alg.} 93(2). 253-266. https://doi.org/10.1016/0021-8693(85)90159-0

[Hon. 01] T.  Honold. (2001). Characterization of finite Frobenius rings. \textit{Arch. Math.} 76(6). 406-415. https://doi.org/10.1007/PL00000451

[Iov. 16] M. C. Iovanov. (2016). Frobenius–Artin algebras and infinite linear codes. \textit{J. Pure Appl.
Algebra.}  220(1). 560-576. https://doi.org/10.1016/j.jpaa.2015.05.030

[Iov. 22] M. C. Iovanov. (2022). On Infinite MacWilliams Rings and Minimal Injectivity Conditions. \textit{Proc. Am. Math. Soc.} 150(11). 4575-4586. https://doi.org/10.1090/proc/15929

[Kra. 25] D. Krasula. (2025). Formal matrix representations of pseudo-Frobenius and Frobenius rings. \textit{arXiv preprint}. https://doi.org/10.48550/arXiv.2504.14285

[Kra. 26] D. Krasula. (2026). Endomorphism rings of Simple Modules and Block Decomposition. \textit{J. Algebra Its Appl.} 25(12).  https://doi.org/10.1142/S0219498826501562

[Lam 91] T. Lam. (1991). A First Course in Noncommutative Rings.  Graduate Texts in Mathematics 131. (Springer). https://doi.org/10.1007/978-1-4684-0406-7

[Lam 99]  T. Lam. (1999). Lectures on Modules and Rings.  Graduate Texts in Mathematics 189. (Springer). https://doi.org/10.1007/978-1-4612-0525-8

[LL 15] X. Liu, H. Liu. (2015). LCD codes over finite chain rings. \textit{Finite Fields Appl.} 34. 1-19. https://doi.org/10.1016/j.ffa.2015.01.004

[Mas. 92] J. L. Massey. (1992). Linear codes with complementary duals. \textit{Discrete Math.} 106-107. 337-342. https://doi.org/10.1016/0012-365X(92)90563-U

[Mül. 70] B. J. Müller. (1970). On Semi-perfect Rings.  \textit{Illinois J. Math} 14(3). 464-467. https://doi.org/10.1215/ijm/1256053082

[Nak. 41] T. Nakayama. (1941). On Frobeniusean Algebras II. \textit{Ann. Math.} 42(1). 1-21. https://doi.org/10.2307/1968984

[Nak. 49] T. Nakayama. (1949). Supplementary remarks on Frobeniusean algebras. I. \textit{Proc. Japan Acad.} 25(7). 45-50. https://doi.org/10.3792/pja/1195571908

[SAS 17] M. Shi, A. Alahmadi, P. Solé. (2017). Codes and Rings: Theory and Practice. (Academic Press, Inc.). https://doi.org/10.1016/C2016-0-04429-7

[SHSS 19] M. Shi, D. Huang, L. Sok, P. Solé. (2019). Double Circulant Self-dual and LCD Codes over Galois Rings. \textit{Adv. Math. Commun.} 13(1). 171-183. https://doi.org/10.3934/amc.2019011

[Wood 99] J. A. Wood. (1999). Duality for Modules over Finite Rings and Applications to Coding Theory. \textit{Amer. J. Math.} 121(3). 555-575. https://doi.org/10.1353/ajm.1999.0024
\end{document}